\begin{document}
%
\title{Maiter: An Asynchronous Graph Processing Framework for Delta-based Accumulative Iterative Computation}

\numberofauthors{1}
\author{\alignauthor
Yanfeng Zhang$^{\dag\ddag}$, Qixin Gao$^{\dag}$, Lixin Gao$^{\ddag}$, Cuirong Wang$^{\dag}$\\
        \affaddr{\quad}\\
       \affaddr{$^{\dag}$Northeastern University, China}\\
       \affaddr{$^{\ddag}$University of Massachusetts Amherst}\\
       \email{\{yanfengzhang, lgao\}@ecs.umass.edu; \{gaoqx, wangcr\}@neuq.edu.cn}
}

\maketitle

\begin{abstract}
Myriad of graph-based algorithms in machine learning and data mining require parsing relational data iteratively. These algorithms are implemented in a large-scale distributed environment in order to scale to massive data sets. To accelerate these large-scale graph-based iterative computations, we propose \emph{delta-based accumulative iterative computation} (DAIC). Different from traditional iterative computations, which iteratively update the result based on the result from the previous iteration, DAIC updates the result by accumulating the ``changes'' between iterations. By DAIC, we can process only the ``changes'' to avoid the negligible updates. Furthermore, we can perform DAIC asynchronously to bypass the high-cost synchronous barriers in heterogeneous distributed environments. Based on the DAIC model, we design and implement an asynchronous graph processing framework, Maiter. We evaluate Maiter on local cluster as well as on Amazon EC2 Cloud. The results show that Maiter achieves as much as 60x speedup over Hadoop and outperforms other state-of-the-art frameworks.
\end{abstract}


\section{Introduction}
\label{sec:intro}

The advances in data acquisition, storage, and networking technology have created huge collections of high-volume, high-dimensional relational data. Huge amounts of the relational data, such as Facebook user activities, Flickr photos, Web pages, and Amazon co-purchase records, have been collected. Making sense of these relational data is critical for companies and organizations to make better business decisions and even bring convenience to our daily life. Recent advances in
data mining, machine learning, and data analytics have led to a flurry of graph analytic techniques that typically require an iterative refinement process~\cite{1367618,1644932,Liben-Nowell:2007:LPS:1241540.1241551,Brin:1998:ALH:297810.297827}. However, the massive amount of data involved and potentially numerous iterations required make performing data analytics in a timely manner challenging. To address this challenge, MapReduce \cite{1251264,hadoop}, Pregel~\cite{1807184}, and a series of distributed frameworks \cite{distgraphlab,naiad,Zaharia:2010:SCC:1863103.1863113,1807184,piccolo} have been proposed to perform large-scale graph processing in a cloud environment.

Many of the proposed frameworks exploit vertex-centric programming model. Basically, the graph algorithm is described from a single vertex's perspective and then applied to each vertex for a loosely coupled execution. Given the input graph $G(V,E)$, each vertex $j\in V$ maintains a vertex state $v_{j}$, which is updated iteratively based on its in-neighbors' state, according to the update function $f$:
\begin{equation}
    v_{j}^{k}=f(v_{1}^{k-1}, v_{2}^{k-1}, \ldots, v_{n}^{k-1}),
\label{eq:concurrent}
\end{equation}
where $v_{j}^{k}$ represents vertex $j$'s state after the $k$ iterations, and $v_{1}, v_{2}, \ldots, v_{n}$ are the states of vertex $j$'s in-neighbors. The iterative process continues until the states of all vertices become stable, when the iterative algorithm converges.

Based on the vertex-centric model, most of the proposed frameworks leverage \textbf{synchronous iteration}. That is, the vertices perform the update in lock steps. At step $k$, vertex $j$ first collects $v_i^{k-1}$ from all its in-neighbors, followed by performing the update function $f$ to obtain $v_j^{k}$ based on these $v_i^{k-1}$. The synchronous iteration requires that all the update operations in the $(k-1)^{\text{th}}$ iteration have to complete before any of the update operations in the $k^{\text{th}}$ iteration start. Clearly, this synchronization is required in each step. These synchronizations might degrade performance, especially in heterogeneous distributed environments.

To avoid the high-cost synchronization barriers, \textbf{asynchronous iteration} was proposed \cite{Chazan1969199}. Performing updates asynchronously means that vertex $j$ performs the update at any time based on the most recent states of its in-neighbors. Asynchronous iteration has been studied in \cite{Chazan1969199,Baudet:1978:AIM:322063.322067,Bertsekas83distributedasynchronous}. Bypassing the synchronization barriers and exploiting the most recent state intuitively lead to more efficient iteration. However, asynchronous iteration might require more communications and perform useless computations. An activated vertex pulls all its in-neighbors' values, but not all of them have been updated, or even worse none of them is updated. In that case, asynchronous iteration performs a useless computation, which impacts efficiency. Furthermore, some asynchronous iteration cannot guarantee to converge to the same fixed point as synchronous iteration, which leads to uncertainty. 

In this paper, we propose \textbf{DAIC}, \emph{delta-based accumulative iterative computation}. In traditional iterative computation, each vertex state is updated based on its in-neighbors' previous iteration states. While in DAIC, each vertex propagates only the ``change'' of the state, which can avoid useless updates. The key benefit of only propagating the ``change'' is that, the ``changes'' can be accumulated monotonically and the iterative computation can be performed asynchronously. In addition, since the amount of ``change'' implicates the importance of an update, we can utilize more efficient priority scheduling for the asynchronous updates. Therefore, DAIC can be executed efficiently and asynchronously. Moreover, DAIC can guarantee to converge to the \textbf{same} fixed point. Given a graph iterative algorithm, we provide the sufficient conditions of rewriting it as a DAIC algorithm and list the guidelines on writing DAIC algorithms. We also show that a large number of well-known algorithms satisfy these conditions and illustrate their DAIC forms.

Based on the DAIC model, we design a distributed framework, \textbf{Maiter}. Maiter relies on Message Passing Interface (MPI) for communication and provides intuitive API for users to implement a DAIC algorithm. We systematically evaluate Maiter on local cluster as well as on Amazon EC2 Cloud~\cite{amazonec2}. Our results are presented in the context of four popular applications. The results show that Maiter can accelerate the iterative computations significantly. For example, Maiter achieves as much as 60x speedup over Hadoop for the well-known PageRank algorithm.

\section{Iterative Graph Processing}

The graph algorithm can be abstracted as the operations on a graph $G(V,E)$. Peoples usually exploit a vertex-centric model to solve the graph algorithms. Basically, the graph algorithm is described from a single vertex's perspective and then applied to each vertex for a loosely coupled execution. Iterative graph algorithms perform the same operations on the graph vertices for several iterations. Each vertex $j\in V$ maintains a vertex state $v_{j}$ that is updated iteratively. The key of a vertex-centric graph computation is the update function $f$ performed on each vertex $j$:
\begin{equation}
    v_{j}^{k}=f(v_{1}^{k-1}, v_{2}^{k-1}, \ldots, v_{n}^{k-1}),
\label{eq:concurrent}
\end{equation}
where $v_{j}^{k}$ represents vertex $j$'s state after the $k^{\text{th}}$ iteration, and $v_{1}, v_{2}, \ldots, v_{n}$ are the states of vertex $j$'s neighbors. The state values are passed between vertices through the edges. The iterative process continues until the graph vertex state becomes stable, when the iterative algorithm converges.

For example, the well-known PageRank algorithm iteratively updates all pages' ranking scores. According to the vertex-centric graph processing model, in each iteration, we update the ranking score of each page $j$, $R_j$, as follows:
\begin{equation}
    R_j^k=d\cdot \sum_{\{i|(i\rightarrow j)\in E\}}\frac{R_i^{k-1}}{|N(i)|}+(1-d),
\label{eq:pagerank}
\end{equation}
where $d$ is a damping factor, $|N(i)|$ is the number of outbound links of page $i$, $(i\rightarrow j)$ is a link from page $i$ to page $j$, and $E$ is the set of directed links. The PageRank scores of all pages are updated round by round until convergence.

In distributed computing, vertices are distributed to multiple processors and perform updates in parallel. For simplicity of exposition, assume that there are enough processors and each processor $j$ performs update for vertex $j$. All vertices perform the update in lock steps. At step $k$, vertex $j$ first collects $v_i^{k-1}$ from all its neighbor vertices, followed by performing the update function $f$ to obtain $v_j^{k}$ based on these $v_i^{k-1}$. The \textbf{synchronous iteration} requires that all the update operations in the $(k-1)^{\text{th}}$ iteration have to be completed before any of the update operations in the $k^{\text{th}}$ iteration starts. Clearly, this synchronization is required in each step. These synchronizations might degrade performance, especially in heterogeneous distributed environments.

To avoid the synchronization barriers, \textbf{asynchronous iteration} was proposed \cite{Chazan1969199}. Performing update operations asynchronously means that vertex $j$ performs the update
\begin{equation}
    v_{j}=f(v_{1}, v_{2}, \ldots, v_{n})
\label{eq:async}
\end{equation}
at any time based on the most recent values of its neighbor vertices, $\{v_{1}, v_{2}, \ldots, v_{n}\}$. The conditions of convergence of asynchronous iterations have been studied in \cite{Chazan1969199,Baudet:1978:AIM:322063.322067,Bertsekas83distributedasynchronous}.

By asynchronous iteration, as vertex $j$ is activated to perform an update, it ``pulls'' the values of its neighbor vertices, \emph{i.e.}, $\{v_1,v_2,\ldots,v_n\}$, and uses these values to perform an update on $v_j$. This scheme does not require any synchronization. However, asynchronous iteration intuitively requires more communications and useless computations than synchronous iteration. An activated vertex needs to pull the values from all its neighbor vertices, but not all of them have been updated, or even worse none of them is updated. In that case, asynchronous iteration performs a useless computation and results in significant communication overhead. Accordingly, ``pull-based'' asynchronous iteration is only applicable in an environment where the communication overhead is negligible, such as shared memory systems. In a distributed environment or in a cloud, ``pull-based'' asynchronous model cannot be efficiently utilized.

As an alternative, after vertex $i$ updates $v_i$, it ``pushes'' $v_i$ to all its neighbors $j$, and $v_i$ is buffered as $B_{i,j}$ on each vertex $j$, which will be updated as new $v_i$ arrives. Vertex $j$ only performs update when there are new values in the buffers and uses these buffered values $B_{i,j}$, to update $v_j$. In this way, the redundant communications can be avoided. However, the ``push-based'' asynchronous iteration results in higher space complexity. Each vertex $j$ has to buffer $|N(j)|$ values, where $|N(j)|$ is the number of vertex $j$'s neighbors. The large number of buffers also leads to considerable maintenance overhead.

To sum up, in a distributed environment, the synchronous iteration results in low performance due to the multiple global barriers, while the asynchronous iteration cannot be efficiently utilized due to various implementation overheads. Also note that, for some iterative algorithms, the asynchronous iteration cannot guarantee to converge to the same fixpoint as the synchronous iteration does, which leads to uncertainty.

\section{Delta-based Accumulative Iterative Computation (DAIC)}
\label{sec:daic}

In this section, we present delta-based accumulative iterative computation, DAIC. By DAIC, the graph iterative algorithms can be executed asynchronously and efficiently. We first introduce DAIC and point out the sufficient conditions of performing DAIC. Then, we propose DAIC's asynchronous execution model. We further prove its convergence and analyze its effectiveness. Under the asynchronous model, we also propose several scheduling policies to schedule the asynchronous updates.

\subsection{DAIC Introduction}
\label{sec:accum}

Based on the idea introduced in Section \ref{sec:intro}, we give the following 2-step update function of DAIC:
\begin{equation}
\left\{
\begin{aligned}
    v_j^k&=v_j^{k-1}\oplus \Delta v_j^{k},\\
    \Delta v_j^{k+1}&=\sum_{i=1}^{n}{\oplus g_{\{i,j\}}(\Delta v_i^{k})}.
\end{aligned}
\right.
\label{eq:sync}
\end{equation}
$k=1,2,\ldots$ is the iteration number. $v_j^k$ is the state of vertex $j$ after $k$ iterations. $\Delta v_j^k$ denotes the change from $v_j^{k-1}$ to $v_j^{k}$ in the `$\oplus$' operation manner, where `$\oplus$' is an abstract operator. $\displaystyle \sum_{i=1}^{n}{\oplus x_i}=x_1\oplus x_2\oplus\ldots\oplus x_n$ represents the accumulation of the ``changes'', where the accumulation is in the `$\oplus$' operation manner.

The first update function says that a vertex state $v_j^k$ is updated from $v_j^{k-1}$ by accumulating the change $\Delta v_j^k$. The second update function says that the change $\Delta v_j^{k+1}$, which will be used in the next iteration, is the accumulation of the received values $g_{\{i,j\}}(\Delta v_i^{k})$ from $j$'s various in-neighbors $i$. The propagated value from $i$ to $j$, $g_{\{i,j\}}(\Delta v_i^{k})$), is generated in terms of vertex $i$'s state change $\Delta v_i^{k}$. Note that, all the accumulative operation is in the `$\oplus$' operation manner.

However, not all iterative computation can be converted to the DAIC form. To write a DAIC, the update function should satisfy the following sufficient conditions.

The \textbf{first condition} is that,
\begin{itemize}
 \item update function $v_j^k=f(v_{1}^{k-1}, v_{2}^{k-1}, \ldots, v_{n}^{k-1})$ can be written in the form:
\end{itemize}
\begin{equation}
    v_j^k=g_{\{1,j\}}(v_{1}^{k-1})\oplus g_{\{2,j\}}(v_{2}^{k-1})\oplus \ldots \oplus g_{\{n,j\}}(v_{n}^{k-1})\oplus c_j
\label{eq:decompose}
\end{equation}
where $g_{\{i,j\}}(v_i)$ is a function applied on vertex $j$'s in-neighbor $i$, which denotes the value passed from vertex $i$ to vertex $j$. In other words, vertex $i$ passes value $g_{\{i,j\}}(v_{i})$ (instead of $v_i$) to vertex $j$. On vertex $j$, these $g_{\{i,j\}}(v_{i})$ from various vertices $i$ and $c_j$ are aggregated (by `$\oplus$' operation) to update $v_j$.

For example, the well-known PageRank algorithm satisfies this condition. It iteratively updates the PageRank scores of all pages. In each iteration, the ranking score of page $j$, $R_j$, is updated as follows:
\begin{equation}
    R_j^k=d\cdot \sum_{\{i|(i\rightarrow j)\in E\}}\frac{R_i^{k-1}}{|N(i)|}+(1-d),\notag
\label{eq:pagerank}
\end{equation}
where $d$ is a damping factor, $|N(i)|$ is the number of outbound links of page $i$, $(i\rightarrow j)$ is a link from page $i$ to page $j$, and $E$ is the set of directed links. The update function of PageRank is in the form of Equation (\ref{eq:decompose}), where $c_j=1-d$, `$\oplus$' is `$+$', and for any page $i$ that has a link to page $j$, $g_{\{i,j\}}(v_{i}^{k-1})=d\cdot\frac{v_i^{k-1}}{|N(i)|}$.

Next, since $\Delta v_j^k$ is defined to denote the ``change'' from $v_j^{k-1}$ to $v_j^{k}$ in the `$\oplus$' operation manner. That is,
\begin{equation}
\label{eq:derive1}
    v_j^{k}=v_j^{k-1}\oplus \Delta v_j^k,
\end{equation}
In order to derive $\Delta v_j^k$ we pose the \textbf{second condition}:
\begin{itemize}
  \item function $g_{\{i,j\}}(x)$ should have the \emph{distributive property} over `$\oplus$', \emph{i.e.}, $g_{\{i,j\}}(x\oplus y)=g_{\{i,j\}}(x)\oplus g_{\{i,j\}}(y)$.
\end{itemize}
By replacing $v_i^{k-1}$ in Equation (\ref{eq:decompose}) with $v_i^{k-2}\oplus \Delta v_i^{k-1}$, we have
\begin{equation}
\label{eq:derive3}
\begin{aligned}
    v_j^k=&g_{\{1,j\}}(v_{1}^{k-2})\oplus g_{\{1,j\}}(\Delta v_{1}^{k-1})\oplus\ldots\oplus\\
    &g_{\{n,j\}}(v_{n}^{k-2})\oplus g_{\{n,j\}}(\Delta v_{n}^{k-1})\oplus c_j.
\end{aligned}
\end{equation}
Further, let us pose the \textbf{third condition}:
\begin{itemize}
  \item operator `$\oplus$' should have the \emph{commutative property}, \emph{i.e.}, $x\oplus y=y\oplus x$;
  \item operator `$\oplus$' should have the \emph{associative property}, \emph{i.e.}, $(x\oplus y)\oplus z=x\oplus (y\oplus z)$;
\end{itemize}
Then we can combine these $g_{\{i,j\}}(v_{i}^{k-2})$, $i=1,2,\ldots,n$, and $c_j$ in Equation (\ref{eq:derive3}) to obtain $v_j^{k-1}$. Considering Equation (\ref{eq:derive1}), the combination of the remaining $g_{\{i,j\}}(\Delta v_{i}^{k-1})$, $i=1,2,\ldots,n$ in Equation (\ref{eq:derive3}), which is $\sum_{i=1}^{n}{\oplus g_{\{i,j\}}(\Delta v_i^{k-1})}$, will result in $\Delta v_i^k$. Then, we have the 2-step DAIC as shown in (\ref{eq:sync}).

To initialize a DAIC, we should set the start values of $v_j^0$ and $\Delta v_j^{1}$. $v_j^0$ and $\Delta v_j^{1}$ can be initialized to be any value, but the initialization should satisfy $v_j^{0}\oplus \Delta v_j^{1}=v_j^{1}=g_{\{1,j\}}(v_{1}^{0})\oplus g_{\{2,j\}}(v_{2}^{0})\oplus \ldots \oplus g_{\{n,j\}}(v_{n}^{0})\oplus c_j$, which is the \textbf{fourth condition}.

The PageRank's update function as shown in Equation (\ref{eq:pagerank}) satisfies all the conditions. $g_{\{i,j\}}(v_{i}^{k-1})=d\cdot\frac{v_i^{k-1}}{|N(i)|}$ satisfies the second condition. `$\oplus$' is `$+$', which satisfies the third condition. In order to satisfy the fourth condition, $v_j^0$ can be initialized to 0, and $\Delta v_j^{1}$ can be initialized to $1-d$.

To sum up, DAIC can be described as follows. Vertex $j$ first updates $v_j^k$ by accumulating $\Delta v_j^k$ (by `$\oplus$' operation) and then updates $\Delta v_j^{k+1}$ with $\sum_{i=1}^{n}{\oplus g_{\{i,j\}}(\Delta v_i^{k})}$.  We refer to $\Delta v_j$ as the \emph{delta value} of vertex $j$ and $g_{\{i,j\}}(\Delta v_i^{k})$ as the \emph{delta message} sent from $i$ to $j$. $\sum_{i=1}^{n}{\oplus g_{\{i,j\}}(\Delta v_i^{k})}$ is the accumulation of the received delta messages on vertex $j$ since the $k^{\text{th}}$ update. Then, the delta value $\Delta v_j^{k+1}$ will be used for the $(k+1)^{\text{th}}$ update. Apparently, this still requires all vertices to start the update synchronously. That is, $\Delta v_j^{k+1}$ has to accumulate all the delta messages $g_{\{i,j\}}(\Delta v_i^{k})$ sent from $j$'s in-neighbors, at which time it is ready to be used in the $(k+1)^{\text{th}}$ iteration. Therefore, we refer to the 2-step iterative computation in (\ref{eq:sync}) as \textbf{synchronous DAIC}.

\subsection{Asynchronous DAIC}

DAIC can be performed asynchronously. That is, a vertex can start update at any time based on whatever it has already received. We can describe \textbf{asynchronous DAIC} as follows, each vertex $j$ performs:
\begin{equation}
\label{eq:async}
\small
\begin{aligned}
\texttt{receive:}&\left\{
\begin{aligned}
&\texttt{Whenever receiving $m_j$,}\\
&\quad\Delta\check v_j\leftarrow \Delta\check v_j\oplus m_j.
\end{aligned}
\right.
\\
\texttt{update:}&\left\{
\begin{aligned}
& \check v_j\leftarrow\check v_j\oplus\Delta\check v_j;\\
& \texttt{For any $h$, if $g_{\{j,h\}}(\Delta\check v_j)\neq$ } \textbf{0},\\
& \quad\texttt{send value } g_{\{j,h\}}(\Delta\check v_j) \texttt{ to $h$};\\
& \Delta\check v_j\leftarrow\textbf{0},
\end{aligned}
\right.
\end{aligned}
\end{equation}
where $m_j$ is the received delta message $g_{\{i,j\}}(\Delta\check v_i)$ sent from any in-neighbor $i$. The \emph{receive} operation accumulates the received delta message $m_j$ to $\Delta\check v_j$. $\Delta\check v_j$ accumulates the received delta messages between two consecutive update operations. The \emph{update} operation updates $\check v_j$ by accumulating $\Delta\check v_j$, sends the delta message $g_{\{j,h\}}(\Delta\check v_j)$ to any of $j$'s out-neighbors $h$, and resets $\Delta\check v_j$ to \textbf{0}. Here, operator `$\oplus$' should have the \emph{identity property} of abstract value \textbf{0}, \emph{i.e.}, $x\oplus \textbf{0}=x$, so that resetting $\Delta\check v_j$ to \textbf{0} guarantees that the received value is cleared. Additionally, to avoid useless communication, it is necessary to check that the sent delta message $g_{\{j,h\}}(\Delta\check v_j)\neq \textbf{0}$ before sending.

For example, in PageRank, each page $j$ has a buffer $\Delta R_j$ to accumulate the received delta PageRank scores. When page $j$ performs an update, $R_j$ is updated by accumulating $\Delta R_j$. Then, the delta message $d\frac{\Delta R_j}{|N(j)|}$ is sent to $j$'s linked pages, and $\Delta R_j$ is reset to 0.

In asynchronous DAIC, the two operations on a vertex, receive and update, are completely independent from those on other vertices. Any vertex is allowed to perform the operations at any time. There is no lock step to synchronize any operation between vertices.

\subsection{Convergence}

To study the convergence property, we first give the following definition of the convergence of asynchronous DAIC.

\newtheorem{definition}{Definition}
\begin{definition}
\label{def:conv}
Asynchronous DAIC as shown in (\ref{eq:async}) converges as long as that after each vertex has performed the receive and update operations an infinite number of times, $\check v_j^{\infty}$ converges to a fixed value $\check v_j^*$.
\end{definition}
Then, we have the following theorem to guarantee that asynchronous DAIC will converge to the same fixed point as synchronous DAIC. Further, since synchronous DAIC is derived from the traditional form of iterative computation, \emph{i.e.}, Equation (\ref{eq:concurrent}), the asynchronous DAIC will converge to the same fixed point as traditional iterative computation.

\newtheorem{theorem}{Theorem}
\begin{theorem}
\label{th:aysnc}
If $v_j$ in (\ref{eq:concurrent}) converges, $\check v_j$ in (\ref{eq:async}) converges. Further, they converge to the same value, \emph{i.e.}, $v_j^{\infty}=\check v_j^{\infty}=\check v_j^*$.
\end{theorem}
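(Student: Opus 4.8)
The plan is to reduce both executions to the accumulation of one common family of atomic terms, and then to use the algebraic conditions to argue that the order and grouping of that accumulation are irrelevant. First I would unfold the synchronous DAIC~(\ref{eq:sync}). Associativity gives $v_j^k = v_j^0 \oplus \Delta v_j^1 \oplus \cdots \oplus \Delta v_j^k$, and repeatedly substituting the second line of (\ref{eq:sync}) while applying the distributivity of each $g_{\{i,j\}}$ expands every delta value over walks in the graph. Writing $W_k(j)$ for the set of directed walks of length $k$ ending at $j$, $s(P)$ for the start of a walk $P$, and $g_P$ for the composition of the edge functions along $P$, this yields
\[ \Delta v_j^{k+1} = \bigoplus_{P \in W_k(j)} g_P\big(\Delta v_{s(P)}^1\big). \]
Commutativity and associativity then let me regroup freely, so that $v_j^\infty = v_j^0 \oplus \bigoplus_{P \in W(j)} g_P(\Delta v_{s(P)}^1)$, the accumulation over all finite walks $W(j)=\bigcup_k W_k(j)$ ending at $j$. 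By the premise that (\ref{eq:concurrent}) converges, together with the equivalence of (\ref{eq:concurrent}) and synchronous DAIC established in Section~\ref{sec:accum}, this accumulation converges to $v_j^*$.

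Next I would show that any asynchronous execution in the sense of Definition~\ref{def:conv} absorbs into $\check v_j$ exactly the same atomic terms $g_P(\Delta v_{s(P)}^1)$, each exactly once. The idea is to trace every delta message back to the unique walk that generated it. When a vertex performs an update, the distributive property guarantees that sending $g_{\{j,h\}}(\Delta\check v_j)$ is equivalent to forwarding $g_{\{j,h\}}$ applied separately to each atomic term currently bundled inside $\Delta\check v_j$; hence bundling never disturbs the walk-by-walk decomposition. Resetting $\Delta\check v_j$ to $\mathbf 0$ after each update, using the identity property, ensures that no term is counted twice, while forwarding every nonzero $g_{\{j,h\}}(\Delta\check v_j)$ ensures that each generated term is eventually propagated. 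Because $\oplus$ is commutative and associative, the value $\check v_j$ approaches is independent of the order in which the terms arrive, so it must coincide with $v_j^\infty$, giving $\check v_j^\infty = v_j^\infty = \check v_j^*$.

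The main obstacle is to convert this term-matching into a genuine convergence proof, because an arbitrary reordering of a convergent accumulation need not itself converge. I would close this gap with a cofinality argument driven by the infinite-update hypothesis of Definition~\ref{def:conv}. For each fixed length $L$ there are only finitely many walks in $\bigcup_{k\le L} W_k(j)$, and producing and absorbing their contributions requires only finitely many updates at finitely many vertices; since every vertex updates infinitely often, there is a time after which $\check v_j$ already contains every atomic term of the synchronous iterate $v_j^{L+1}$. The only remaining discrepancy between $\check v_j$ and $v_j^\infty$ is then the contribution of walks longer than $L$, which is precisely a tail of the synchronous accumulation and is therefore negligible as $L\to\infty$ by the convergence premise. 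Making this tail estimate precise in the abstract `$\oplus$' setting — pinning down the exact sense in which Definition~\ref{def:conv}'s limit controls the not-yet-delivered contributions — is the crux of the argument, and it is exactly here that convergence of the synchronous (equivalently, traditional) iteration is indispensable.
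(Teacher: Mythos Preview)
Your proposal is correct and shares its core idea with the paper: both unfold the iteration into an `$\oplus$'-accumulation of atomic walk contributions $g_P(\Delta v_{s(P)}^1)$ and then argue that the asynchronous execution eventually collects exactly the same family of terms. The paper's Lemma~\ref{lepr1} and Lemma~\ref{lepr2} are precisely your walk decomposition, stated for the synchronous and asynchronous cases respectively.

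Where the two diverge is in how they close the argument. The paper introduces an explicit activation-sequence formalism $S=\{S_1,S_2,\ldots\}$ and proves a two-sided sandwich: Lemma~\ref{lm:big} shows that at time $t_k$ the asynchronous state has absorbed only a \emph{subset} of the walks that $v_j^k$ has, hence is no closer to $v_j^*$; Lemma~\ref{lm:small} shows that for any $k$ one can find $k''$ by which the asynchronous state has absorbed a \emph{superset}, hence is no further. Convergence then follows by squeezing, relying (implicitly) on the monotone approach of both processes to $v_j^*$. You instead use a one-sided cofinality step --- essentially Lemma~\ref{lm:small} alone --- and handle the not-yet-delivered long walks directly as a tail of the synchronous accumulation. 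Your version is slightly cleaner conceptually and more honest about where the real work lies (controlling that tail in the abstract `$\oplus$' setting), while the paper's sandwich makes the monotone-convergence mechanism more explicit via the ``subset of paths $\Rightarrow$ further from $v_j^*$'' direction. In substance the two routes are equivalent; the paper's activation-sequence bookkeeping is the main thing you would gain by consulting it.
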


We explain the intuition behind Theorem \ref{th:aysnc} as follows. Consider the process of DAIC as information propagation in a graph. Vertex $i$ with an initial value $\Delta v_i^1$ propagates delta message $g_{\{i,j\}}(\Delta v_i^1)$ to its out-neighbor $j$, where $g_{\{i,j\}}(\Delta v_i^1)$ is accumulated to $v_j$ and a new delta message $g_{\{j,h\}}(g_{\{i,j\}}(\Delta v_i^1))$ is produced and propagated to any of $j$'s out-neighbors $h$. By synchronous DAIC, the delta messages propagated from all vertices should be received by all their neighbors before starting the next round propagation. That is, the delta messages originated from a vertex are propagated strictly hop by hop. In contrast, by asynchronous DAIC, whenever some delta messages arrive, a vertex accumulates them to $\check v_j$ and propagates the newly produced delta messages to its neighbors. No matter synchronously or asynchronously, the spread delta messages are never lost, and the delta messages originated from each vertex will be eventually spread along all paths. For a destination node, it will eventually collect the delta messages originated from all vertices along various propagating paths. All these delta messages are eventually received and contributed to any $v_j$. Therefore, synchronous DAIC and asynchronous DAIC will converge to the same result.

In the following, we will show four lemmas to support our proof of Theorem \ref{th:aysnc}. The first two lemmas show the formal representations of a vertex state by synchronous DAIC and by asynchronous DAIC, respectively. The third lemma shows that there is a time instance when the result by asynchronous DAIC is smaller than or equal to the result by synchronous DAIC. Correspondingly, we show another lemma that there is a time instance when the result by asynchronous DAIC is larger than or equal to the result by synchronous DAIC. Once these lemmas are proved, it is sufficient to establish Theorem \ref{th:aysnc}.

\newtheorem{lemma}{Lemma}

\begin{lemma}
\label{lepr1}
By synchronous DAIC, $v_j$ after $k$ iterations is:
\begin{equation}
\small
\label{eq:syncproof}
\begin{aligned}
v_j^{k}=v_j^0\oplus \Delta v_j^1\oplus\sum_{l=1}^{k}\oplus\Big(\prod_{\{i_0,\ldots,i_{l-1},j\}\in P(j,l)}\oplus g_{\{i,j\}}(\Delta v_i^1)\Big),
\end{aligned}
\end{equation}
where
\begin{equation}
\small
\label{eq:syncproof2}
\begin{aligned}
\prod_{\{i_0,\ldots,i_{l-1},j\}}\oplus g_{\{i,j\}}(\Delta v_{i}^1)=g_{\{i_{l-1},j\}}(\ldots g_{\{i_1,i_2\}}(g_{\{i_0,i_1\}}(\Delta v_{i_0}^1)))\notag
\end{aligned}
\end{equation}
and $P(j,l)$ is a set of $l$-hop paths to reach node $j$.
\end{lemma}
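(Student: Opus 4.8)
The plan is to prove Lemma~\ref{lepr1} by induction on the iteration number $k$, unrolling the synchronous DAIC recurrence~(\ref{eq:sync}) and tracking how delta values propagate hop by hop through the graph. The core observation is that the second update equation, $\Delta v_j^{k+1}=\sum_{i=1}^{n}\oplus\, g_{\{i,j\}}(\Delta v_i^{k})$, says that the delta value at $j$ in iteration $k+1$ is the accumulation of one-hop-propagated delta values from all in-neighbors. Iterating this $k$ times and using the distributive property (second condition) together with commutativity and associativity (third condition), each $\Delta v_j^{k}$ should expand into an accumulation of composed functions $g_{\{i_{l-1},j\}}(\ldots g_{\{i_0,i_1\}}(\Delta v_{i_0}^1))$ taken over all $k$-hop paths ending at $j$. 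The first update equation then accumulates these successive deltas into $v_j^k$, yielding the telescoped sum over $l=1,\ldots,k$ that appears in~(\ref{eq:syncproof}).

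First I would establish the base case. For $k=1$ we have $v_j^1=v_j^0\oplus\Delta v_j^1$ directly from~(\ref{eq:derive1}), and the sum $\sum_{l=1}^{1}$ contains only the $l=1$ term, which is $\sum_{\{i_0,j\}\in P(j,1)}\oplus\, g_{\{i_0,j\}}(\Delta v_{i_0}^1)$ — but I must check the indexing convention so that this single-hop term is consistent with the claimed formula (the statement's outermost term already isolates $v_j^0\oplus\Delta v_j^1$, so the path sum should start contributing the genuinely propagated deltas). I would then prove the inductive step: assuming the formula holds for $k$, compute $v_j^{k+1}=v_j^k\oplus\Delta v_j^{k+1}$ and show $\Delta v_j^{k+1}$ equals the $(k+1)$-hop path sum. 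The key sub-claim, which I would prove by a parallel induction, is that
\begin{equation}
\label{eq:deltapath}
\Delta v_j^{l}=\sum_{\{i_0,\ldots,i_{l-1},j\}\in P(j,l)}\oplus\, g_{\{i_{l-1},j\}}(\ldots g_{\{i_0,i_1\}}(\Delta v_{i_0}^1)).
\end{equation}
Given~(\ref{eq:deltapath}), applying one more step of the recurrence means applying $g_{\{j,h\}}$ to $\Delta v_j^l$ and accumulating over in-neighbors; distributivity lets $g$ pass through the $\oplus$-accumulation, extending each $l$-hop path by one edge to produce exactly the $(l+1)$-hop path sum, while commutativity and associativity let me regroup the resulting terms freely across different source vertices $i$.

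The main obstacle I anticipate is bookkeeping the path-set notation precisely rather than any conceptual difficulty. In particular, the product symbol $\prod\oplus$ in~(\ref{eq:syncproof2}) denotes nested function composition along a single path, not an $\oplus$-accumulation, whereas $\sum\oplus$ over $P(j,l)$ is the genuine accumulation across distinct paths; conflating these would break the argument, so I would fix clean notation up front and state explicitly how a $k$-hop path $\{i_0,\ldots,i_{k-1},j\}$ decomposes as a $(k-1)$-hop path to some in-neighbor $i_{k-1}$ of $j$ followed by the edge $(i_{k-1},j)$. I would also need to handle the accumulation over \emph{all} in-neighbors uniformly: writing $g_{\{i,j\}}$ for every ordered pair and adopting the convention that $g_{\{i,j\}}$ contributes $\textbf{0}$ (the identity element) whenever $(i\to j)\notin E$, so that the sum $\sum_{i=1}^{n}\oplus$ ranges harmlessly over all $n$ vertices and the path sets $P(j,l)$ automatically include only edges that actually exist. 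With distributivity, commutativity, associativity, and the identity property all available as hypotheses, the regrouping at each inductive step is routine, so the proof reduces to carefully matching the unrolled recurrence against the path-indexed expression.
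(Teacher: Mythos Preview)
Your approach is correct and is essentially the same as the paper's: both unroll the synchronous DAIC recurrence~(\ref{eq:sync}) so that $v_j^k$ expands as $v_j^0\oplus\Delta v_j^1\oplus\Delta v_j^2\oplus\cdots\oplus\Delta v_j^k$, and then identify each $\Delta v_j^{l}$ with the $\oplus$-accumulation of nested $g$-compositions along all $l$-hop paths into $j$. The paper does this in one shot by writing out the fully expanded nested sum and observing that the $l^{\text{th}}$ term corresponds to $l$-hop neighbors, whereas you make the same computation explicit via induction on $k$ together with the auxiliary identity~(\ref{eq:deltapath}); the content is identical, and your caution about the path/indexing bookkeeping is well placed since that is the only place a slip could occur.
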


\begin{proof}
According to the update functions shown in Equation (2) in the TPDS manuscript, after $k$ iterations, we have
\begin{equation}
\small
\label{eq:syncafterk4}
\begin{aligned}
v_j^{k}=&v_j^0\oplus \Delta v_j^1\oplus\Big(\sum_{i_1=1}^{n}\oplus g_{\{i_1,j\}}(\Delta v_{i_1}^1)\Big)\oplus\\
&\Big(\sum_{i_1=1}^{n}\oplus g_{\{i_1,j\}}\big(\sum_{i_2=1}^{n}\oplus g_{\{i_2,i_1\}}(\Delta v_{i_2}^1)\big)\Big)\oplus\ldots\oplus\\
&\Big(\sum_{i_1=1}^{n}\oplus g_{\{i_1,j\}}\big(\sum_{i_2=1}^{n}\oplus g_{\{i_2,i_1\}}\big(\ldots \sum_{i_k=1}^{n}\oplus g_{\{i_k,i_{k-1}\}}(\Delta v_{i_k}^1)\big)\big)\Big).\notag
\end{aligned}
\end{equation}
The $l^{\text th}$ term of the right side this equation corresponds to the received values from the $(l+1)$-hop away neighbors. Therefore, we have the claimed equation.
\end{proof}

In order to describe asynchronous DAIC, we define a continuous time instance sequence $\{t_1,t_2,\ldots,t_k\}$. Correspondingly, we define $S=\{S_1,S_2,\ldots,S_k\}$ as the series of subsets of vertices, where $S_k$ is a subset of vertices, and the propagated values of all vertices in $S_k$ have been received by their direct neighbors during the interval between time $t_{k-1}$ and time $t_k$ . As a special case, synchronous updates result from a sequence $\{V,V,\ldots,V\}$, where $V$ is the set of all vertices.

\begin{lemma}
\label{lepr2}
By asynchronous DAIC, following an activation sequence $S$, $\check v_j$ at time $t_k$ is:
\begin{equation}
\small
\label{eq:asyncproof}
\begin{aligned}
\check v_j^{k}=v_j^0\oplus \Delta v_j^1\oplus\sum_{l=1}^{k}\oplus\Big(\prod_{\{i_0,\ldots,i_{l-1},j\}\in P'(j,l)}\oplus g_{\{i,j\}}(\Delta v_i^1)\Big)
\end{aligned}
\end{equation}
where $P'(j,l)$ is a set of $l$-hop paths that satisfy the following conditions. First, $i_0\in S_l$. Second, if $l>0$, $i_1,\ldots,i_{l-1}$ respectively belongs to the sequence $S$. That is, there is $0<m_1<m_2<\ldots<m_{l-1}<k$ such that $i_h\in S_{m_{l-h}}$.
\end{lemma}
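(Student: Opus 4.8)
The plan is to mirror the proof of Lemma~\ref{lepr1} but replace the synchronous ``all vertices fire at every step'' bookkeeping with the activation-sequence bookkeeping induced by $S$. I would argue by induction on $k$ (the number of time instances $t_1,\ldots,t_k$ that have elapsed), maintaining as the induction hypothesis exactly the claimed formula~(\ref{eq:asyncproof}) together with a companion expression for the pending delta value $\Delta\check v_j$ at time $t_k$, namely the accumulation of all delta messages that have arrived at $j$ but not yet been applied in an \emph{update} step. Tracking this buffered quantity alongside $\check v_j$ is essential, since the asynchronous \emph{receive}/\emph{update} split in~(\ref{eq:async}) means a message can sit in $\Delta\check v_j$ for several time steps before being folded into $\check v_j$ and re-propagated.

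For the base case $k=0$ (or $k=1$), I would check that before any propagation $\check v_j = v_j^0 \oplus \Delta v_j^1$, matching the formula with an empty outer sum, using the fourth (initialization) condition from the excerpt. For the inductive step, suppose the formula holds at $t_{k-1}$. During the interval $(t_{k-1},t_k]$ the vertices in $S_k$ perform \emph{update} operations whose propagated values reach their neighbors by $t_k$. Each such operation, by the distributive property (second condition) of $g_{\{j,h\}}$ over $\oplus$, pushes $g_{\{j,h\}}(\Delta\check v_j)$, where $\Delta\check v_j$ is precisely the accumulated delta from the induction hypothesis. Expanding this composition of $g$'s along the edges traversed so far produces exactly one new term $g_{\{i_{l-1},j\}}(\ldots g_{\{i_0,i_1\}}(\Delta v_{i_0}^1))$ for each path in $P'(j,l)$: the newly activated vertex contributes the innermost index $i_0\in S_l$, and the constraint $0<m_1<\cdots<m_{l-1}<k$ with $i_h\in S_{m_{l-h}}$ records the strictly increasing times at which the intermediate vertices along the path fired. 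The commutative and associative properties (third condition) let me reorder and regroup these accumulated contributions freely, and the identity property of $\mathbf 0$ guarantees that resetting $\Delta\check v_j$ to $\mathbf 0$ after an update loses nothing already counted. Each such term is appended to $\check v_j$ exactly once, which is what the outer accumulation over $l$ and over $P'(j,l)$ encodes.

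The main obstacle I anticipate is the path-indexing itself: showing that the informal ``one term per admissible path with strictly increasing activation times'' description coincides rigorously with what the \emph{receive}/\emph{update} dynamics actually accumulate, with no path double-counted and none omitted. The subtlety is that a single vertex may fire several times and may have buffered messages from several earlier arrivals, so I must argue that a delta message originating at $i_0$ traverses a fixed edge-path and is applied to $\check v_j$ at exactly one update, and that the strict ordering $m_1<\cdots<m_{l-1}$ on the $S$-membership of the intermediate vertices is both necessary and sufficient for that message to have physically reached $j$ by $t_k$. I would handle this by induction on the path length $l$: a message contributing to a length-$l$ path at $j$ must, one update earlier, have been a message contributing to a length-$(l-1)$ path at $j$'s predecessor $i_{l-1}$, and that predecessor must have fired at some time $t_{m_1}$ strictly before $t_k$, yielding the required nesting of indices. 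The synchronous special case $S=\{V,V,\ldots\}$ then recovers Lemma~\ref{lepr1} with $P'(j,l)=P(j,l)$, providing a useful consistency check on the indexing.
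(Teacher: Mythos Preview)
Your proposal is correct and follows the same approach as the paper, namely unfolding the \emph{receive}/\emph{update} recursion in~(\ref{eq:async}) step by step; the paper's proof is in fact a single sentence (``We can derive $\check v_j^{k}$ from Equation~(\ref{eq:async})''), so your induction on $k$ with the auxiliary buffered-delta bookkeeping is exactly the detailed argument the paper omits. Your care about no-double-counting and the strict ordering of activation times is appropriate and goes well beyond what the paper actually writes down.
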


\begin{proof}
We can derive $\check v_j^{k}$ from Equation (6) in the TPDS manuscript.
\end{proof}

\begin{lemma}
\label{lm:big}
For any sequence $S$ that each vertex performs the receive and update operations an infinite number of times, given any iteration number $k$, we can find a subset index $k'$ in $S$ such that $|v_j^{*}-\check v_j^{k'}|\geq |v_j^{*}-v_j^k|$ for any vertex $j$.
\end{lemma}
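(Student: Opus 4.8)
The plan is to prove the inequality by directly comparing the closed-form path representations supplied by Lemma \ref{lepr1} and Lemma \ref{lepr2}, and to exhibit the required index explicitly by taking $k'=k$. Both $v_j^k$ and $\check v_j^{k'}$ share the common head $v_j^0\oplus\Delta v_j^1$ and differ only in which path contributions $\prod_{\{i_0,\ldots,i_{l-1},j\}}\oplus g_{\{i,j\}}(\Delta v_i^1)$ they accumulate. Since the fixed point $v_j^*$ is the synchronous limit $v_j^{\infty}$, it is the accumulation over \emph{all} paths of every length, and both $v_j^k$ and $\check v_j^{k'}$ are partial accumulations of this same collection. So the whole argument reduces to a containment statement about path sets together with a monotonicity statement about the accumulation.

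First I would establish the containment. By definition $P'(j,l)$ consists of those $l$-hop paths whose hop-by-hop propagation is actually completed under the schedule $S$, so $P'(j,l)\subseteq P(j,l)$ for every $l$. Moreover, by Lemma \ref{lepr2} the state $\check v_j^{k}$ at time $t_k$ only accumulates contributions of paths of length $l=1,\ldots,k$, because a length-$l$ path needs $l$ successive activation steps $S_{m_1}<\ldots<S_{m_{l-1}}$ and only $k$ such steps have elapsed. Taking $k'=k$, the collection of path contributions entering $\check v_j^{k}$ is therefore contained in the collection entering $v_j^{k}$, and this containment holds at the single time instant $t_k$ simultaneously for every vertex $j$, which is exactly what the lemma requires.

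The remaining, and decisive, step is monotonicity: I must argue that an accumulation over a sub-collection of paths is at least as far from $v_j^*$ as an accumulation over the full collection $P(j,\cdot)$ up to length $k$. Concretely, I would show that each path contribution $g_{\{i,j\}}(\ldots(\Delta v_{i_0}^1))$ pushes the partial accumulation consistently toward $v_j^*$, so that accumulating a sub-collection can only leave the value at least as far from $v_j^*$, giving $|v_j^*-\check v_j^{k}|\ge|v_j^*-v_j^{k}|$. This is immediate in the concrete instances: in PageRank $\oplus$ is $+$, every $g_{\{i,j\}}$ is a non-negative scaling, and the partial sums increase monotonically to $v_j^*$, so that $0\le\check v_j^{k}\le v_j^{k}\le v_j^*$ and the claimed inequality follows. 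I expect this monotonicity to be the main obstacle, since in the abstract $\oplus$ setting it is not automatic and must be extracted from the convergence hypothesis of (\ref{eq:concurrent}) together with the distributivity, commutativity, associativity and identity properties already assumed for $\oplus$ and $g$; making precise the sense in which the omitted contributions have a consistent ``sign'' relative to $v_j^*$ is where the real work lies. Finally, the assumption that every vertex acts infinitely often is used only to guarantee that the time instant $t_k$ exists, so that $k'=k$ is a legitimate subset index of $S$.
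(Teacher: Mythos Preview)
Your proposal is correct and follows essentially the same route as the paper: invoke the path representations of Lemma~\ref{lepr1} and Lemma~\ref{lepr2}, take $k'=k$, observe that the asynchronous paths $P'(j,l)$ form a subcollection of the synchronous paths $P(j,l)$ for $l\le k$, and conclude that the partial accumulation $\check v_j^{k}$ is no closer to $v_j^{*}$ than $v_j^{k}$. The paper asserts the monotonicity step (``$\check v_j^{k}$ receives less values than $v_j^k$, correspondingly $\check v_j^{k}$ is further to the converged point'') without further comment, so your caution that this step is not automatic in the abstract $\oplus$ setting is well placed rather than a deviation from the intended argument.
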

\begin{proof}
Based on Lemma \ref{lepr1}, we can see that, after $k$ iterations, each node receives the
values from its direct/indirect neighbors as far as $k$ hops away, and it receives the values originated from each direct/indirect neighbor
once for each path. In other words, each node $j$ propagates its own
initial value $\Delta v_j^1$ (first to itself) and receives the
values from its direct/indirect neighbors through a path once.

Based on Lemma \ref{lepr2}, we can see that, after time $t_k$, each node receives values
from its direct/indirect neighbors as far as $k$ hops away, and it
receives values originated from each direct/indirect neighbor
through a path at most once. At time period $[t_{k-1},t_k]$, a value is received
from a neighbor only if the neighbor is in $S_k$. If the neighbor
is not in $S_k$, the value is stored at the neighbor or is on the way to other nodes. The node
will eventually receive the value as long as every node performs receive and update an infinite number of times.

As a result, $\check v_j^{k}$ receives values through a subset of the paths from $j$'s direct/indirect incoming neighbors within $k$ hops. In contrast, $v_j^k$ receives values through all paths from $j$'s direct/indirect incoming neighbors within $k$ hops. $\check v_j^{k}$ receives less values than $v_j^k$. Correspondingly, $\check v_j^{k}$ is further to the converged point $v_j^{*}$ than $v_j^{k}$. Therefore, we can set $k'=k$ and have the claim.
\end{proof}

\begin{lemma}
\label{lm:small}
For any sequence $S$ that each vertex performs the receive and update operations an infinite number of times, given any iteration number $k$, we can find a subset index $k''$ in $S$ such that $|v_j^{*}-\check v_j^{k''}|\leq |v_j^{*}-v_j^k|$ for any vertex $j$.
\end{lemma}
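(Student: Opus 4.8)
The plan is to mirror the structure of Lemma \ref{lm:big} but to reverse the direction of the path-set containment. Recall from Lemma \ref{lepr1} that $v_j^k$ aggregates the delta messages along \emph{every} path of at most $k$ hops that reaches $j$, whereas by Lemma \ref{lepr2} the asynchronous state $\check v_j^{k''}$ aggregates delta messages only along those paths in $P'(j,l)$ whose intermediate vertices have been activated in the proper temporal order within the sequence $S$. Thus the whole argument reduces to choosing an index $k''$ large enough that the paths collected asynchronously \emph{contain} all paths of length at most $k$, i.e. $P(j,l)\subseteq P'(j,l)$ for every $l\le k$.

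First I would observe that, since the graph $G(V,E)$ is finite, there are only finitely many paths of at most $k$ hops terminating at $j$. Fix any such path $i_0\to i_1\to\cdots\to i_{l-1}\to j$. I would then argue that the delta message originating at $i_0$ is eventually propagated all the way along this path and received by $j$: because the `$\oplus$' accumulation never discards a received value (the identity property of \textbf{0}) and every vertex performs receive and update infinitely often, $i_1$ must eventually update \emph{after} having received $g_{\{i_0,i_1\}}(\Delta v_{i_0}^1)$, then $i_2$ must eventually update after receiving the forwarded value, and so on, until $j$ finally receives the contribution of the whole path. Taking the maximum activation time over the finite collection of all such paths yields a finite time $t_{k''}$ by which every path of length at most $k$ has already contributed to $\check v_j$; hence $P(j,l)\subseteq P'(j,l)$ for all $l\le k$ at this $k''$.

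Finally, I would invoke the same monotonicity that underlies Lemma \ref{lm:big}: accumulating the contribution of an additional path moves the state monotonically \emph{toward} the fixed point $v_j^*$ without overshooting, so a state that has collected a superset of paths is at least as close to $v_j^*$. Since $\check v_j^{k''}$ has aggregated all the path contributions present in $v_j^k$, and possibly more (including paths longer than $k$ hops), it follows that $|v_j^*-\check v_j^{k''}|\le |v_j^*-v_j^k|$ for every vertex $j$, as claimed.

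I expect the main obstacle to be making rigorous the existence of the finite index $k''$ — specifically, justifying that each vertex on a given path will activate \emph{after} receiving its predecessor's contribution, rather than only before it. This is where the ``messages are never lost'' property and the assumption that every vertex activates infinitely often must be combined carefully; once the required activation ordering is secured for each of the finitely many paths, taking the maximum index closes the argument.
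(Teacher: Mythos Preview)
Your proposal is correct and follows essentially the same approach as the paper: both argue that, because every vertex updates infinitely often, one can choose $k''$ large enough that every path of length at most $k$ into $j$ has been fully activated in the proper order, so that the asynchronous state has accumulated a superset of the path contributions present in $v_j^k$ and is therefore at least as close to $v_j^*$. Your version is in fact more carefully articulated than the paper's (you spell out the finiteness of the path set, the inductive propagation along each path, and the monotonicity toward the fixed point), whereas the paper compresses all of this into a single sentence invoking the infinite-activation hypothesis.
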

\begin{proof}
From the proof of Lemma~\ref{lm:big}, we know that $v_j^k$ receives values from all paths from direct/indirect
neighbors of $j$ within $k$ hops away. In order to let $\check v_j^{k''}$ receives all those values, we have to make sure
that all paths from direct/indirect
neighbors of $j$ within $k$ hops away are activated and their values are received. Since in sequence $S$ each vertex performs the update an infinite number of times, we can always find $k''$ such that $\{S_1, S_2, \ldots, S_{k''}\}$ contains all paths from direct and indirect
neighbors of $j$ within $k$ hops away. Correspondingly, $\check v_j^{k''}$ can be nearer to the converged point $v_j^{*}$ than $v_j^{k}$, or at least equal to. Therefore, we have the claim.
\end{proof}

Based on Lemma \ref{lm:big} and Lemma \ref{lm:small}, we have Theorem \ref{th:aysnc}.

\subsection{Effectiveness}
\label{sec:analysis}

As illustrated above, $v_j$ and $\check v_j$ both converge to the same fixed point. By accumulating $\Delta v_j$ (or $\Delta\check v_j$), $v_j$ (or $\check v_j$) either monotonically increases or monotonically decreases to a fixed value $v_j^*=v_j^{\infty}=\check v_j^{\infty}$. In this section, we show that $\check v_j$ converges faster than $v_j$.

To simplify the analysis, we first assume that 1) only one update occurs at any time point; 2) the transmission delay can be ignored, \emph{i.e.}, the delta message sent from vertex $i$, $g_{\{i,j\}}(\Delta v_i)$ (or $g_{\{i,j\}}(\Delta\check v_i)$), is directly accumulated to $\Delta v_j$ (or $\Delta\check v_j$).

The workload can be seen as the number of performed updates. Let \emph{update sequence} represent the update order of the vertices. By synchronous DAIC, all the vertices have to perform the update once and only once before starting the next round of updates. Hence, the update sequence is composed of a series of \emph{subsequences}. The length of each subsequence is $|V|$, \emph{i.e.}, the number of vertices. Each vertex occurs in a subsequence once and only once. We call this particular update sequence as \emph{synchronous update sequence}. While in asynchronous DAIC, the update sequence can follow any update order. For comparison, we will use the same synchronous update sequence for asynchronous DAIC.

By DAIC, no matter synchronously and asynchronously, the propagated delta messages of an update on vertex $i$ in subsequence $k$, \emph{i.e.}, $g_{\{i,j\}}(\Delta v_i^k)$ (or $g_{\{i,j\}}(\Delta\check v_i)$), are directly accumulated to $\Delta v_j^{k+1}$ (or $\Delta\check v_j$), $j=1,2,\ldots,n$. By synchronous DAIC, $\Delta v_j^{k+1}$ cannot be accumulated to $v_j$ until the update of vertex $j$ in subsequence $k+1$. In contrast, by asynchronous DAIC, $\Delta\check v_j$ is accumulated to $\check v_j$ immediately whenever vertex $j$ is updated after the update of vertex $i$ in subsequence $k$. The update of vertex $j$ might occur in subsequence $k$ or in subsequence $k+1$. If the update of vertex $j$ occurs in subsequence $k$, $\check v_j$ will accumulate more delta messages than $v_j$ after $k$ subsequences, which means that $\check v_j$ is closer to $v_j^*$ than $v_j$. Otherwise, $\check v_j=v_j$. Therefore, we have Theorem \ref{th:faster}.

\begin{theorem}
\label{th:faster}
Based on the same update sequence, after $k$ subsequences, we have $\check v_j$ by asynchronous DAIC and $v_j$ by synchronous DAIC. $\check v_j$ is closer to the fixed point $v_j^*$ than $v_j$ is, \emph{i.e.}, $|v_j^*-\check v_j|\leq |v_j^*-v_j|$.
\end{theorem}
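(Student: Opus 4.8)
The plan is to reduce the claim to a single monotone comparison and then prove that comparison by induction on the number of subsequences, formalizing the discussion that immediately precedes the theorem. By the monotone convergence property recorded at the start of this section, $v_j$ and $\check v_j$ each approach $v_j^*$ from one side: either both rise up to $v_j^*$ or both fall down to $v_j^*$. I would treat the increasing case and note the decreasing case is symmetric. In the increasing case $v_j\le v_j^*$ and $\check v_j\le v_j^*$ at every stage, since accumulating deltas never overshoots the limit, so $|v_j^*-v_j|=v_j^*-v_j$ and $|v_j^*-\check v_j|=v_j^*-\check v_j$. Hence it suffices to prove the single inequality $\check v_j\ge v_j$ after $k$ subsequences, which then delivers $v_j^*-\check v_j\le v_j^*-v_j$ as claimed.

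The structural fact I would lean on is that, in the increasing case, every delta message $g_{\{i,h\}}(\Delta)$ produced during the run is a \emph{non-negative} contribution with respect to `$\oplus$', i.e.\ accumulating it can only push a vertex state further up toward $v_j^*$; this sign-consistency is precisely what makes the states monotone, so I would record it as a standing consequence of the monotone convergence assumption (for PageRank, for instance, $g_{\{i,h\}}(\Delta)=d\,\Delta/|N(i)|$ is non-negative whenever $\Delta$ is). Granting this, $\check v_j\ge v_j$ follows once I show that, after the same number of subsequences $k$, the collection of delta messages (indexed by the propagation paths of Lemma~\ref{lepr1}) that has been accumulated \emph{into the value} $\check v_j$ contains the collection accumulated into $v_j$, with each path still contributing exactly once on each side.

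To establish this containment I would induct on $k$, tracking updates at the granularity of individual vertices within a subsequence under the common update order. The inductive step uses the observation isolated before the theorem: when vertex $i$ updates in subsequence $k$ and sends $g_{\{i,j\}}(\Delta\check v_i)$ to $j$, synchronous DAIC cannot fold this message into $v_j$ until $j$'s update in subsequence $k+1$, whereas asynchronous DAIC folds it into $\check v_j$ already in subsequence $k$ whenever $j$ updates after $i$. Thus every path-message that the synchronous value has incorporated by the end of subsequence $k$ (all paths of length at most $k$, by Lemma~\ref{lepr1}) is also incorporated by the asynchronous value no later, since asynchronous still advances at least one hop per subsequence; and the within-subsequence ordering lets asynchronous additionally fast-forward messages several hops in a single subsequence. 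The asynchronous accumulated-message collection is therefore a superset of the synchronous one, and by non-negativity of every contribution $\check v_j\ge v_j$.

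I expect the main obstacle to be making the "superset of accumulated messages" claim rigorous rather than merely intuitive: one must verify that fast-forwarding can only \emph{add} path-messages and never drop or double-count one that the synchronous scheme has already applied, so that the extra messages translate into a genuine one-sided inequality. The cleanest way to discharge this is to match the path-indexed terms of Lemma~\ref{lepr1} against those of Lemma~\ref{lepr2} specialized to the common synchronous update sequence, showing $P(j,l)$ is contained in the union of the asynchronous path sets through subsequence $k$. Combined with the non-negativity of contributions and the no-overshoot bound $\check v_j\le v_j^*$, this yields $|v_j^*-\check v_j|\le|v_j^*-v_j|$.
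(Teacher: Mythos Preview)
Your proposal is correct and follows essentially the same route as the paper: both arguments reduce to showing that, after $k$ subsequences, the asynchronous value has accumulated a superset of the path-indexed delta contributions that the synchronous value has, by instantiating Lemma~\ref{lepr2} on the common update sequence and comparing against the $k$-hop path set $P(j,l)$ of Lemma~\ref{lepr1}. The paper's proof is terser---it simply observes that all $\le k$-hop messages have arrived by time $kn$ and that additional longer paths may also have landed---while you spell out the monotonicity reduction and the no-double-counting concern more carefully, but the underlying mechanism is the same.
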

\begin{proof}
In a single machine, the update sequence for asynchronous DAIC is a special $S$, where only one vertex in $S_k$ for any $k$ and any vertex is appeared once and only once in $\{S_{(k-1)n+1},S_{(k-1)n+2},\ldots,S_{(k-1)n+n}\}$ for any $k$, where $n$ is the total number of vertices. Based on Lemma \ref{lepr2}, we have
\begin{equation}
\small
\label{eq:asyncproof}
\begin{aligned}
\check v_j^{kn}=v_j^0\oplus \Delta v_j^1\oplus\sum_{l=1}^{kn}\oplus\Big(\prod_{\{i_0,\ldots,i_{l-1},j\}\in P'(j,l)}\oplus g_{\{i,j\}}(\Delta v_i^1)\Big),
\end{aligned}
\end{equation}
The values sent from any $k$-hop-away neighbors of $j$ will be received during time period $[t_{(k-1)n},t_{kn}]$, \emph{i.e.}, the sent values from $\{S_{(k-1)n+1},S_{(k-1)n+2},\ldots,S_{(k-1)n+n}\}$ are received. Further, $\check v_j^{kn}$ receives more values from further hops away, as far as $kn$-hop-away neighbors. Therefore, $\check v_j^{kn}$ is nearer to the converged point $v_j^{*}$ than $v_j^{k}$, \emph{i.e.}, $|v_j^{*}-\check v_j^{kn}|\leq |v_j^{*}-v_j^{k}|$.
\end{proof}

\subsection{Scheduling Policies}
\label{sec:schedule}

By asynchronous DAIC, we should control the update order of the vertices, \emph{i.e.}, specifying the scheduling policies. In reality, a subset of vertices are assigned to a processor, and multiple processors are running in parallel. The processor can perform the update for the assigned vertices in a round-robin manner, which is referred to as \emph{round-robin scheduling}. Moreover, it is possible to schedule the update of these local vertices dynamically by identifying their importance, which is referred to as \emph{priority scheduling}. In \cite{priter}, we have found that selectively processing a subset of the vertices has the potential of accelerating iterative computation. Some of the vertices can play an important decisive role in determining the final converged outcome. Giving an update execution priority to these vertices can accelerate the convergence.

In order to show the progress of the iterative computation, we quantify the iteration progress with $L_1$ norm of $\check v$, \emph{i.e.}, $||\check v||_1=\sum_{i}{\check v_{i}}$. Asynchronous DAIC either monotonically increases or monotonically decreases $||\check v||_1$ to a fixed point $||v^*||_1$. According to (\ref{eq:async}), an update of vertex $j$, \emph{i.e.}, $\check v_j=\check v_j\oplus\Delta\check v_j$, either increases $||\check v||_1$ by $(\check v_j\oplus\Delta\check v_j-\check v_j)$ or decreases $||\check v||_1$ by $(\check v_j-\check v_j\oplus\Delta\check v_j)$. Therefore, by priority scheduling, vertex $j=arg\max_j{|\check v_j\oplus\Delta\check v_j-\check v_j|}$ is scheduled first. In other words, The bigger $|\check v_j\oplus\Delta\check v_j-\check v_j|$ is, the higher update priority vertex $j$ has. For example, in PageRank, we set each page $j$'s scheduling priority based on $|R_j+\Delta R_j-R_j|=\Delta R_j$. Then, we will schedule page $j$ with the largest $\Delta R_j$ first. To sum up, by priority scheduling, the vertex $j=arg\max_j{|\check v_j\oplus\Delta\check v_j-\check v_j|}$ is scheduled for update first.

To guarantee the convergence of asynchronous DAIC under the priority scheduling, we first pose the following lemma.

\begin{lemma}
\label{lm:infinite}
By asynchronous priority scheduling, $\check v_j'$ converges to the same fixed point $v_j^*$ as $v_j$ by synchronous iteration converges to, \emph{i.e.}, $\check v_j'^{\infty}=v_j^{\infty}=v_j^*$.
\end{lemma}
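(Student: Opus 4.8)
The plan is to deduce this lemma from Theorem~\ref{th:aysnc} rather than to re-derive convergence from scratch. Theorem~\ref{th:aysnc}, together with Lemma~\ref{lm:big} and Lemma~\ref{lm:small}, already shows that asynchronous DAIC converges to $v_j^*$ for \emph{any} activation sequence $S$ in which every vertex performs the receive and update operations an infinite number of times. Priority scheduling only fixes the order in which vertices are updated, so it induces one particular such sequence $S$. Consequently the entire burden of the proof is to verify a single fairness property: under the rule that updates $j=\arg\max_j|\check v_j\oplus\Delta\check v_j-\check v_j|$ first, no vertex is \emph{starved}, \emph{i.e.}, every vertex is still selected infinitely often. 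Once that is established, Lemma~\ref{lm:big} and Lemma~\ref{lm:small} apply verbatim to the priority sequence and pinch $\check v_j'$ to $v_j^*$, which is exactly the claim.

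First I would set up the no-starvation property by contradiction. Suppose priority scheduling updates a nonempty set $U$ of vertices only finitely often, and let $W=V\setminus U$ be the vertices updated infinitely often. Fix a time $T$ after the last update of every vertex in $U$; after $T$ no vertex of $U$ is updated, hence no vertex of $U$ sends any delta message. The restriction of the process to $W$ after $T$ therefore receives no external input and updates every one of its vertices infinitely often, so it behaves like a fair asynchronous DAIC; by the argument of Theorem~\ref{th:aysnc} it converges, which forces the delta values $\Delta\check v_w$, and hence the priorities $|\check v_w\oplus\Delta\check v_w-\check v_w|$ of all $w\in W$, to tend to \textbf{0}.

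Next I would exploit the fact that a starved vertex never resets its delta. For $u\in U$ the quantity $\Delta\check v_u$ is never reset to \textbf{0} after $T$ and only absorbs incoming messages through $\oplus$, so (using that $\oplus$ accumulates contributions without cancellation) its priority is non-decreasing once it becomes nonzero. If some $u\in U$ ever receives a nonzero delta message after $T$ --- which happens whenever $u$ has an in-neighbor in $W$ --- its priority stays bounded away from \textbf{0}, while the priorities of all $w\in W$ decay to \textbf{0}; hence $u$ eventually attains the global maximum priority and must be scheduled, contradicting $u\in U$. In the remaining case no edge runs from $W$ into $U$, so $U$ is closed under incoming edges and, its internal traffic having ceased after $T$, every $\Delta\check v_u$ stabilizes; a stabilized nonzero priority again forces a scheduling contradiction, whereas all-\textbf{0} stabilized deltas mean that the input-free subgraph $U$ has reached a DAIC fixed point, which by uniqueness equals $v_u^*$. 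This exhausts the cases and establishes fairness.

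I expect the decisive difficulty to be the abstract-operator reasoning in the third paragraph: showing that a starved vertex's priority cannot silently decay to \textbf{0} relies on $\oplus$ accumulating monotonically without cancellation, and on transferring the convergence guarantee of Theorem~\ref{th:aysnc} to the self-contained sub-process on $W$. For concrete instances such as PageRank, where $\oplus$ is $+$ and all delta messages are nonnegative, both points are immediate, so the real work is to pin down the structural hypotheses on $\oplus$ under which the monotonicity step and the fixed-point-uniqueness step go through in general.
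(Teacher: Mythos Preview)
Your approach is essentially the same as the paper's: both reduce the claim to Theorem~\ref{th:aysnc} by proving, via contradiction, that priority scheduling cannot starve any vertex---assuming a starved set, the actively scheduled vertices' priorities decay toward \textbf{0} while the starved vertices' (never-reset) deltas do not, so eventually a starved vertex attains the maximum priority and must be selected. Your write-up is somewhat more explicit about invoking Theorem~\ref{th:aysnc} on the sub-process $W$ and about the boundary case where no edges run from $W$ into $U$, but the core argument matches the paper's proof.
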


\begin{proof}
There are two cases to guide priority scheduling. We only prove the case that schedules vertex $j$ that results in the largest $(\check v_j\oplus\Delta \check v_j-\check v_j)$. The proof of the other case is similar.

We prove the lemma by contradiction. Assume there is a set of vertices, $S_*$, which is scheduled to perform update only before time $t_*$. Then the accumulated values on the vertices of $S_*$, $\check v_{S_*}$, will not change since then. While they might receive values from other vertices, \emph{i.e.}, $||\check v_{S_*}\oplus\Delta\check v_{S_*}-\check v_{S_*}||_1$ might become larger. On the other hand, the other vertices ($V-S_*$) continue to perform the update operation, the received values on them, $\Delta\check v_{V-S_*}$, are accumulated to $\check v_{V-S_*}$ and propagated to other vertices again. As long as the iteration converges, the difference between the results of two consecutive updates, $||\check v_{V-S_*}\oplus\Delta\check v_{V-S_*}- v_{V-S_*}||_1$ should decrease ``steadily'' to 0. Therefore, eventually at some point,
\begin{equation}
\label{eq:lemma:infinite1}
\frac{||\check v_{S_*}\oplus\Delta\check v_{S_*}-\check v_{S_*}||_1}{|S_*|}>||\check v_{V-S_*}\oplus\Delta\check v_{V-S_*}-\check v_{V-S_*}||_1.
\end{equation}
That is,
\begin{equation}
\label{eq:lemma:infinite1}
\max_{j\in S_*}(\check v_j\oplus\Delta\check v_j-\check v_j)>\max_{j\in V-S_*}(\check v_j\oplus\Delta\check v_j-\check v_j).
\end{equation}
Since the vertex that has the largest $(\check v_j\oplus\Delta \check v_j-\check v_j)$ should be scheduled under priority scheduling, a vertex in $S_*$ should be scheduled at this point, which contradicts with the assumption that any vertex in $S_*$ is not scheduled after time $t_*$.
\end{proof}

Then, with the support of Lemma \ref{lm:infinite} and Theorem \ref{th:aysnc}, we have Theorem \ref{th:pri}.

\begin{theorem}
\label{th:pri}
By asynchronous priority scheduling, $\check v_j'$ converges to the same fixed point $v_j^*$ as $v_j$ by synchronous iteration converges to, \emph{i.e.}, $\check v_j'^{\infty}=v_j^{\infty}=v_j^*$.
\end{theorem}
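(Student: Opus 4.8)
The plan is to realize priority scheduling as a single instance of the asynchronous DAIC model and then simply discharge the hypothesis of Theorem~\ref{th:aysnc}. Recall that Theorem~\ref{th:aysnc} already guarantees that, for \emph{any} activation sequence $S$ in which every vertex performs the receive and update operations an infinite number of times, the asynchronous iterate converges to the very same fixed point $v_j^*$ as the synchronous iteration. Priority scheduling merely selects at each step the single vertex $j=\arg\max_j|\check v_j\oplus\Delta\check v_j-\check v_j|$, which produces a perfectly legitimate activation sequence $S=\{S_1,S_2,\ldots\}$ with each $|S_k|=1$. Hence the entire theorem collapses to the \emph{fairness} claim that this particular sequence visits every vertex infinitely often; equivalently, that priority scheduling starves no vertex.

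I would establish this fairness property exactly as in Lemma~\ref{lm:infinite}, whose substance is precisely this no-starvation statement and which I take as given. The argument runs by contradiction: suppose a nonempty set $S_*$ of vertices is scheduled only finitely often, say never after some time $t_*$. Then the states $\check v_{S_*}$ freeze, while these vertices keep \emph{receiving} delta messages from the still-active vertices, so their largest pending change $\max_{j\in S_*}(\check v_j\oplus\Delta\check v_j-\check v_j)$ stays bounded away from $0$. Meanwhile the complementary set $V-S_*$ is updated infinitely often and, because the overall iteration converges, the per-update change on those vertices must decay steadily to $0$. Consequently, at some later step the largest pending change within $S_*$ strictly dominates that within $V-S_*$, forcing the priority rule to pick a vertex of $S_*$ --- contradicting the assumption. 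Thus every vertex is indeed scheduled infinitely often.

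With the fairness hypothesis verified, I would conclude by invoking Theorem~\ref{th:aysnc} directly: the priority-scheduled sequence is a fair asynchronous sequence, so since $v_j$ converges under synchronous iteration, $\check v_j'$ converges to the identical limit, yielding $\check v_j'^{\infty}=v_j^{\infty}=v_j^*$.

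The step I expect to be the main obstacle is the fairness argument itself, since that is the genuine mathematical content; once Lemma~\ref{lm:infinite} is in hand the theorem is essentially a corollary. The delicate point inside that argument is justifying that the per-update change on $V-S_*$ really decreases to $0$ (so that the strict comparison of the two $\max$-values eventually holds) while the monotone $\oplus$-accumulation keeps the frozen vertices' pending change from vanishing --- the rescheduling contradiction rests on both facts simultaneously.
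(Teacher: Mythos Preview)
Your proposal is correct and mirrors the paper's own argument essentially line for line: the paper likewise derives Theorem~\ref{th:pri} by combining Lemma~\ref{lm:infinite} (the no-starvation/fairness property of priority scheduling, proved by the very contradiction you sketch) with Theorem~\ref{th:aysnc}. You have also correctly identified that the only substantive work is the fairness lemma, after which the theorem is an immediate corollary.
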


Furthermore, according to the analysis presented above, we have Theorem \ref{th:faster2} to support the effectiveness of priority scheduling.

\begin{theorem}
\label{th:faster2}
Based on asynchronous DAIC, after the same number of updates, we have $\check v_j'$ by priority scheduling and $\check v_j$ by round-robin scheduling. $\check v_j'$ is closer to the fixed point $v_j^*$ than $\check v_j$ is, \emph{i.e.}, $|v_j^*-\check v_j'|\leq |v_j^*-\check v_j|$.
\end{theorem}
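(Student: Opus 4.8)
The plan is to reduce the claim to a statement about the monotone progress of asynchronous DAIC and then to compare the two schedules update by update. As noted at the start of Section~\ref{sec:analysis} and again in Section~\ref{sec:schedule}, each component $\check v_j$ converges to $v_j^*$ one-sidedly: $\check v_j$ either increases to $v_j^*$ without overshooting or decreases to it. I would fix one direction (say the increasing case; the decreasing case is symmetric), so that ``$\check v_j'$ is closer to $v_j^*$ than $\check v_j$'' becomes the purely order-theoretic statement $\check v_j'\geq\check v_j$, and the aggregate distance to the fixed point is $||v^*||_1-||\check v||_1\geq 0$. The quantity I would track is therefore $||\check v||_1$, which each update of a vertex $j$ increases by exactly $(\check v_j\oplus\Delta\check v_j-\check v_j)$, the per-update gain already identified in Section~\ref{sec:schedule}. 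Priority scheduling is, by definition, the greedy rule that always performs the update with the largest such gain.

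With this setup I would argue the theorem by induction on the number of updates $N$, comparing the priority sequence with the round-robin sequence through an exchange argument in the spirit of the proof of Theorem~\ref{th:faster}. The key auxiliary fact is that delta is conserved: by the identity, associative, and commutative properties of `$\oplus$', resetting $\Delta\check v_j$ to \textbf{0} after an update and postponing an update never destroys any pending contribution, but only defers when that contribution is accumulated and propagated. Concretely, I would lean on the path representation of Lemma~\ref{lepr2}: at any moment $\check v_j=v_j^0\oplus\Delta v_j^1\oplus(\text{the sum over paths whose delta has already been delivered to }j)$, while $v_j^*$ is the same expression summed over \emph{all} paths, so the residual $v_j^*-\check v_j$ is exactly the contribution of the not-yet-delivered paths. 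Because priority always discharges the currently largest available gain, an exchange argument should show that after $N$ updates the collection of path-deltas it has delivered dominates that of round-robin, whence $||\check v'||_1\geq ||\check v||_1$ with priority at least as close to the target. Convergence of both schedules to the same $v^*$ (Theorem~\ref{th:pri} together with Theorem~\ref{th:aysnc}) guarantees this comparison is made against a common, well-defined fixed point.

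The hard part will be justifying the exchange step, since the per-update gains are \emph{not} independent across updates: updating vertex $j$ emits the messages $g_{\{j,h\}}(\Delta\check v_j)$ that enlarge the future gains of its out-neighbors $h$, so greedily taking the largest gain now could in principle forgo a larger cumulative gain later. I would resolve this by showing the coupling is monotone, i.e. that processing the larger pending gain first never decreases the total gain still available afterward; this follows from the fact that accumulation via `$\oplus$' moves each $\check v_j$ monotonically toward $v_j^*$ and never past it, so no reordering can overshoot or cancel already-delivered mass. A secondary subtlety is that the stated inequality is component-wise whereas the greedy progress argument is most natural for the aggregate $||\cdot||_1$; I would bridge this gap by carrying the one-sided monotonicity through the coupling, so that dominance of the delivered path-sets yields $\check v_j'\geq\check v_j$ for every $j$ individually rather than merely in sum.
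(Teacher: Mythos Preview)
The paper does not actually give a proof of this theorem: it states the result immediately after the sentence ``Furthermore, according to the analysis presented above, we have Theorem~\ref{th:faster2}\ldots'' and leaves it at that. So your proposal is already far more detailed than what the paper offers. The relevant comparison is therefore not between two arguments but between your argument and the claim itself.

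You correctly locate the crux: the per-update gains are \emph{not} independent, and an exchange argument has to cope with the fact that updating a low-priority vertex now can enlarge a neighbor's future gain. Your proposed resolution --- that ``processing the larger pending gain first never decreases the total gain still available afterward'' because accumulation is monotone and one-sided --- does not go through. Monotonicity of $\check v_j$ toward $v_j^*$ only tells you that no schedule can overshoot; it says nothing about which schedule accrues more mass after a fixed number of steps. In fact the greedy rule can strictly lose to round-robin. Take three vertices $A,B,C$ with $\oplus=+$, a single edge $B\to C$ with $g_{\{B,C\}}(x)=x$, and initial deltas $\Delta A=10$, $\Delta B=1$, $\Delta C=5$. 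After three updates, round-robin in the order $(A,B,C)$ delivers $B$'s message to $C$ before $C$ is processed and reaches the fixed point $(10,1,6)$; priority scheduling picks $A$, then $C$ (gain $5>1$), then $B$, ending at $(10,1,5)$ with $\Delta C=1$ still pending. For vertex $C$ one gets $|v_C^*-\check v_C'|=1>0=|v_C^*-\check v_C|$, contradicting the component-wise inequality, and the aggregate $\|\cdot\|_1$ comparison fails as well.

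So the gap is not a missing lemma you could supply; the statement as written is false in the generality of the DAIC framework, and neither the exchange argument nor the path-dominance argument via Lemma~\ref{lepr2} can be made to yield it. What the preceding discussion in Section~\ref{sec:schedule} really supports is the much weaker fact that each single priority step maximizes the \emph{immediate} increment of $\|\check v\|_1$, which is a heuristic for faster convergence but not the uniform dominance asserted here.
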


\section{DAIC algorithms}

In this section, we provide the guidelines of writing DAIC algorithms and present a broad class of DAIC algorithms. 

\subsection{Writing DAIC algorithms}
\label{sec:app}

Given an iterative algorithm, the following steps are recommended for converting it to a DAIC algorithm.

\begin{itemize}
  \item \textbf{Step1: Vertex-Centric Check}. Check whether the update function is applied on each vertex, and write the vertex-centric update function $f$. If not, try to rewrite the update function.
  \item \textbf{Step2: Formation Check}. Check whether $f$ is in the form of Equation (\ref{eq:decompose})? If yes, identify the sender-based function $g_{\{i,j\}}(v_i)$ applied on each sender vertex $i$, the abstract operator `$\oplus$' for accumulating the received delta messages on receiver vertex $j$.
  \item \textbf{Step3: Properties Check}. Check whether $g_{\{i,j\}}(v_i)$ has the distributive property and whether operator `$\oplus$' has the commutative property and the associative property?
  \item \textbf{Step4: Initialization}. According to (\ref{eq:sync}), initialize $v_j^0$ and $\Delta v_j^{1}$ to satisfy $v_j^{1}=v_j^{0}\oplus \Delta v_j^{1}$, and write the iterative computation in the 2-step DAIC form.
  \item \textbf{Step5: Priority Assignment} (Optional). Specify the scheduling priority of each vertex $j$ as $|\check v_j\oplus\Delta\check v_j-\check v_j|$ for scheduling the asynchronous updates.
\end{itemize}

\subsection{DAIC algorithms}
\label{sec:app2}

\begin{table*}[!t]
\small
    \caption{A list of DAIC algorithms}
    \label{tab:algorithms}
    \centering
    \begin{tabular}{c|c|c|c|c}
    \hline
    algorithm & $g_{\{i,j\}}(x)$ & $\oplus$ & $v_j^0$ & $\Delta v_j^1$\\
\hline\hline
SSSP & $x+A(i,j)$ & $\min$ & $\infty$ & 0 ($j=s$) or $\infty$ ($j\neq s$)\\
\hline
Connected Components & $A(i,j)\cdot x$ & $\max$ & -1 & $j$\\
\hline
PageRank & $d\cdot A(i,j)\cdot\frac{x}{|N(j)|}$ & $+$ & 0 & $1-d$\\
\hline
Adsorption & $p_{i}^{cont}\cdot A(i,j)\cdot x$ & $+$ & 0 & $p_{j}^{inj}\cdot{I_{j}}$\\
\hline
HITS ($authority$) & $d\cdot A(i,j)\cdot x$ & $+$ & 0 & 1\\
\hline
Katz metric & $\beta\cdot A(i,j)\cdot x$ & $+$ & 0 & 1 ($j=s$) or 0 ($j\neq s$) \\
\hline
Jacobi method & $-\frac{A_{ji}}{A_{jj}}\cdot x$ & $+$ & 0 & $\frac{b_j}{A_{jj}}$ \\
\hline
SimRank & $\frac{C\cdot A(i,j)}{|I(a)||I(b)|}\cdot x$ & $+$ & $|I(a)\cap I(b)|$ ($a\neq b$) or 1 ($a=b$) & $\frac{|I(a)||I(b)|}{C}$ ($a\neq b$) or 0 ($a=b$)\\
\hline
Rooted PageRank & $A(j,i)\cdot x$ & $+$ & 0 & 1 ($j=s$) or 0 ($j\neq s$) \\
\hline
\end{tabular}
\end{table*}

Following the guidelines, we have found a broad class of DAIC algorithms, including single source shortest path (SSSP), PageRank, linear equation solvers, Adsorption, SimRank, etc. Table~\ref{tab:algorithms} shows a list of such algorithms. Each of their update functions is represented with a tuple ($g_{\{i,j\}}(x)$, $\oplus$, $v_j^0$, $\Delta v_j^1$). In Table \ref{tab:algorithms}, matrix $A$ represents the graph adjacency information. If there is an edge from vertex $i$ to vertex $j$, $A_{i,j}$ represents the edge weight from $i$ to $j$, or else $A_{i,j}=0$.

\subsubsection{Single Source Shortest Path}
The \emph{single source shortest path} algorithm (SSSP) has been widely used in online social networks and web mapping. Given a source node $s$, the algorithm derives the shortest distance from $s$ to all the other nodes on a directed weighted graph. Initially, each node $j$'s distance $d_j^0$ is initialized to be $\infty$ except that the source $s$'s distance $d_s^0$ is initialized to be 0. In each iteration, the shortest distance from $s$ to $j$, $d_j$, is updated with the following update function:
\begin{equation}
\begin{aligned}
d_j^k=\min\{d_1^{k-1}+A(1,j), &d_2^{k-1}+A(2,j), \ldots, \\
&d_n^{k-1}+w(n,j), d_j^0\},\notag
\end{aligned}
\end{equation}
where $A(i,j)$ is the weight of an edge from node $i$ to node $j$, and $A(i,j)=\infty$ if there is no edge between $i$ and $j$. The update process is performed iteratively until convergence, where the distance values of all nodes no longer change.

Following the guidelines, we identify that operator `$\oplus$' is `$\min$', function $g_{\{i,j\}}(d_i) = d_i + A(i,j)$. Apparently, the function $g_{\{i,j\}}(x)$ has the distributive property, and the operator `$\min$' has the commutative and associative properties. The initialization can be $d_j^0=\infty$ and $\Delta d_j^1=0$ if $j=s$, or else $\Delta d_j=\infty$. Therefore, SSSP can be performed by DAIC. Further, suppose $\Delta d_j$ is used to accumulate the received distance values by `$\min$' operation, the scheduling priority of node $j$ would be $d_j-\min\{d_j,\Delta d_j\}$.

\subsubsection{Linear Equation Solvers}

Generally, DAIC can be used to solve systems of linear equations of the form
\begin{equation}
\label{eq:linear}
A\cdot\chi=b,\notag
\end{equation}
where $A$ is a sparse $n\times n$ matrix with each entry $A_{ij}$, and $\chi,b$ are size-$n$ vectors with each entry $\chi_j$, $b_j$ respectively.

One of the linear equation solvers, \emph{Jacobi method}, iterates each entry of $\chi$ as follows:
\begin{equation}
\label{eq:linear}
\chi_j^k=-\frac{1}{A_{jj}}\cdot\sum_{i\neq j}A_{ji}\cdot\chi_i^{k-1}+\frac{b_j}{A_{jj}}.\notag
\end{equation}
The method is guaranteed to converge if the spectral radius of the iteration matrix is less than 1. That is, for any matrix norm $||\cdot||$, $\lim_{k\rightarrow \infty}||B^k||^{\frac{1}{k}}<1$, where $B$ is the matrix with $B_{ij}=-\frac{A_{ij}}{A_{ii}}$ for $i\neq j$ and $B_{ij}=0$ for $i=j$.

Following the guidelines, we identify that operator `$\oplus$' is `$+$', function $g_{\{i,j\}}(\chi_i)=-\frac{A_{ji}}{A_{jj}}\cdot \chi_i$. Apparently, the function $g_{\{i,j\}}(x)$ has the distributive property, and the operator `$+$' has the commutative and associative properties. The initialization can be $\chi_j^0=0$ and $\Delta \chi_j^1=\frac{b_j}{A_{jj}}$. Therefore, the Jacobi method can be performed by DAIC. Further, suppose $\Delta \chi_j$ is used to accumulate the received delta message, the scheduling priority of node $j$ would be $\Delta \chi_j$.

\subsubsection{PageRank}

The \emph{PageRank} algorithm \cite{Brin:1998:ALH:297810.297827} is a popular algorithm proposed for ranking web pages. Initially, the PageRank scores are evenly distributed among all pages. In each iteration, the ranking score of page $j$, $R_j$, is updated as follows:
\begin{equation}
\label{eq:pr}
R_j=d\cdot \sum_{\{i|(i\rightarrow j)\in E\}}\frac{R_i}{|N(i)|}+(1-d),
\end{equation}
where $d$ is damping factor, $|N(i)|$ is the number of outbound links of page $i$, and $E$ is the set of link edges. The iterative process terminates when the sum of changes of two consecutive iterations is sufficiently small. The initial guess of $R_i$ can be any value. In fact, the final converged ranking score is independent from the initial value.

Following the guidelines, we identify that operator `$\oplus$' is `$+$', function $g_{\{i,j\}}(R_i)=d\cdot A_{i,j}\frac{R_i}{N(i)}$, where $A$ represents the adjacency matrix and $A_{i,j}=1$ if there is a link from $i$ to $j$ or else $A_{i,j}=0$. Apparently, the function $g_{\{i,j\}}(x)$ function has distributive property and the operator `$+$' has the commutative and associative properties. The initialization can be $R_j^0=0$ and $\Delta R_j^1=1-d$. Therefore, PageRank can be performed by DAIC. Further, suppose $\Delta R_j$ is used to accumulate the received PageRank values, the scheduling priority of node $j$ would be $\Delta R_j$.

\subsubsection{Adsorption}

\emph{Adsorption} \cite{1367618} is a graph-based label propagation algorithm that provides personalized recommendation for contents (e.g., video, music, document, product). The concept of \emph{label} indicates a certain common feature of the entities. Given a weighted graph $G=(V, E)$, where $V$ is the set of nodes, $E$ is the set of edges. $A$ is a column normalized matrix (i.e., $\sum_{i}A(i,j)=1$) indicating that the sum of a node's inbound links' weight is equal to 1. Node $j$ carries a probability distribution $L_j$ on label set $L$, and each node $j$ is initially assigned with an \emph{initial distribution} $I_j$. The algorithm proceeds as follows. For each node $j$, it iteratively computes the weighted average of the label distributions from its neighboring nodes, and then uses the random walk probabilities to estimate a new label distribution as follows:
\begin{equation}
\label{eq:adsorption}
L_j^k=p_{j}^{cont}\cdot{\sum_{\{i|(i\rightarrow j)\in E\}}A(i,j)\cdot{L_i^{k-1}}+p_{j}^{inj}\cdot{I_{j}}},\notag
\end{equation}
where $p_{j}^{cont}$ and $p_{j}^{inj}$ are constants associated with node $j$. If Adsorption converges, it will converge to a unique set of label distributions.


Following the guidelines, we identify that operator `$\oplus$' is `$+$', $g_{\{i,j\}}(L_i)=p_{j}^{cont}\cdot A(i,j)\cdot L_i$. Apparently, the function $g_{\{i,j\}}(x)$ has the distributive property, and the operator `$+$' has the commutative and associative properties. The initialization can be $L_j^0=0$ and $\Delta L_j^1=p_{j}^{inj}\cdot{I_{j}}$. Therefore, Adsorption can be performed by accumulative updates. Further, suppose $\Delta L_j$ is used to accumulate the received distance values, the scheduling priority of node $j$ would be $\Delta L_j$.

\subsubsection{SimRank}

SimRank \cite{Jeh:2002:SMS:775047.775126} was proposed to measure the similarity between two nodes in the network. It has been successfully used for many applications in social networks, information retrieval, and link prediction. In SimRank, the similarity between two nodes (or objects) $a$ and $b$ is defined as the average similarity between nodes linked with $a$ and those with $b$. Mathematically, we iteratively update $s(a,b)$ as the similarity value between node $a$ and $b$:
\begin{equation}
s^k(a,b)=\frac{C}{|I(a)||I(b)|}\sum_{c\in I(a),d\in I(b)}s^{k-1}(c,d),\notag
\label{eq:simrank}
\end{equation}
where $s^1(a,b)=1$ if $a=b$, or else $s^1(a,b)=0$, $I(a)={b\in V |(b,a)\in E}$ denoting all the nodes that have a link to $a$, and $C$ is a decay factor satisfying $0<C<1$.

However, this update function is applied on node-pairs. It is not a vertex-centric update function. We should rewrite the update function. Cao \emph{et. al.} has proposed \emph{Delta-SimRank} \cite{delta-simrank}. They first construct a node-pair graph $G^2=\{V^2,E^2\}$. Each node denotes one pair of nodes of the original graph. One node $ab$ in $G^2$ corresponds to a pair of nodes $a$ and $b$ in $G$. There is one edge $(ab,cd)\in E^2$ if $(a,c)\in E$ and $(b,d)\in E$. If the graph size $|G|=n$, the node-pair graph size $|G^2|=n^2$. Let vertex $j$ represent $ab$ and vertex $i$ represent $cd$. Then, the update function of a vertex $j\in G^2$ is:
\begin{equation}
s^k(j)=\frac{C}{|I(a)||I(b)|}\sum_{i\in I(j)}s^{k-1}(i),\notag
\label{eq:deltasimrank}
\end{equation}
where $I(a)$ and $I(b)$ denote the neighboring nodes of $a$ and $b$ in $G$ respectively, and $I(j)$ denotes the neighboring nodes of $j$ in $G^2$.

The new form of SimRank update function in the node-pair graph $G^2$ is vertex-centric. Following the DAIC guidelines, we identify that operator `$\oplus$' is `$+$', and function $g_{\{i,j\}}(s(i))=\frac{C\cdot A(i,j)}{|I(a)||I(b)|}\cdot s(i)$, where $A_{i,j}=1$ if $i\in I(j)$ (\emph{i.e.}, $cd\in I(ab)$) or else $A_{i,j}=0$. Apparently, the function $g_{\{i,j\}}(x)$ has the distributive property, and the operator `$+$' has the commutative and associative properties. The initialization of $s^0(j)$ can be $s^0(j)=s^0(ab)=1$ if $a=b$, or else $s^0(j)=s^0(ab)=\sum_{c\in I(a)\&c\in I(b)}1=|I(a)\cap I(b)|$. The initialization of $\Delta s^1(j)$ can be $\Delta s^1(j)=\Delta s^1(ab)=0$ if $a=b$, or else $\Delta s^1(j)=\Delta s^1(ab)=\frac{|I(a)||I(b)|}{C}$. Therefore, SimRank can be performed by DAIC. Further, suppose $\Delta s(j)$ is used to accumulate the received delta messages, the scheduling priority of node $j$ would be $\Delta s(j)$.

\subsubsection{Other Algorithms}

We have shown several typical DAIC algorithms. Following the guidelines, we can rewrite them in DAIC form. In addition, there are many other DAIC algorithms. Table 1 of the main manuscript shows a list of such algorithms. Each of their update functions is represented with a tuple ($g_{\{i,j\}}(x)$, $\oplus$, $v_j^0$, $\Delta v_j^1$).

The \emph{Connected Components} algorithm~\cite{5360248} finds connected components in a graph (the graph adjacency information is represented in matrix $A$, $A_{i,j}=1$ if there is a link from $i$ to $j$ or else $A_{i,j}=0$). Each node updates its component id with the largest received id and propagates its component id to its neighbors, so that the algorithm converges when all the nodes belonging to the same connected component have the same component id.

\emph{Hyperlink-Induced Topic Search} (HITS)~\cite{Kleinberg:1999:ASH:324133.324140} ranks web pages in a web linkage graph $W$ by a 2-phase iterative update, the \emph{authority update} and the \emph{hub update}. Similar to Adsorption, the authority update requires each node $i$ to generate the output values damped by $d$ and scaled by $A(i,j)$, where matrix $A=W^TW$, while the hub update scales a node's output values by $A'(i,j)$, where matrix $A'=WW^T$.

The \emph{Katz metric}~\cite{katz1953} is a proximity measure between two nodes in a graph (the graph adjacency information is represented in matrix $A$, $A_{i,j}=1$ if there is a link from $i$ to $j$, or else $A_{i,j}=0$). It is computed as the sum over the collection of paths between two nodes, exponentially damped by the path length with a damping factor $\beta$.

\emph{Rooted PageRank}~\cite{1644932} captures the probability for any node $j$ running into node $s$, based on the node-to-node proximity, $A(j,i)$, indicating the probability of jumping from node $j$ to node $i$.

\section{Maiter}
\label{sec:design}

\subsection{System Design}
To support implementing a DAIC algorithm in a large-scale distributed manner and in a highly efficient asynchronous manner, we propose an asynchronous distributed framework, Maiter. Users only need to follow the guidelines to specify the function $g_{\{i,j\}}(v_i)$, the abstract operator `$\oplus$', and the initial values $v_j^0$ and $\Delta v_j^1$ through Maiter API (Maiter API is described in Section 4 of the supplementary file). The framework will automatically deploy these DAIC algorithms in the distributed environment and perform asynchronous iteration efficiently.

Maiter is implemented by modifying Piccolo \cite{piccolo}, and Maiter's source code is available online \cite{maiter}. It relies on message passing for communication between vertices. In Maiter, there is a master and multiple workers. The master coordinates the workers and monitors the status of workers. The workers run in parallel and communicate with each other through MPI. Each worker performs the update for a subset of vertices. In the following, we introduce Maiter's key functionalities.

\textbf{Data Partition.} Each worker loads a subset of vertices in memory for processing. Each vertex is indexed by a global unique \emph{vid}. The assignment of a vertex to a worker depends solely on the vid. A vertex with vid $j$ is assigned to worker $h(j)$, where $h()$ is a hash function applied on the vid. Besides, preprocessing for smart graph partition can be useful. For example, one can use a lightweight clustering algorithm to preprocess the input graph, assigning the strongly connected vertices to the same worker, which can reduce communication.

\begin{figure}[!t]
  \centering
  \includegraphics[width=2.2in]{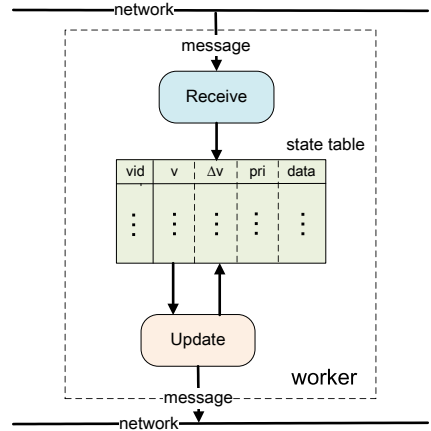}
  \caption{Worker overview.}
  \label{fig:iterupdate}
\end{figure}

\textbf{Local State Table.} The vertices in a worker are maintained in a local in-memory key-value store, \emph{state table}. Each state table entry corresponds to a vertex indexed by its vid. As depicted in Fig. ~\ref{fig:iterupdate}, each table entry contains five fields. The 1st field stores the vid $j$ of a vertex; the 2rd field stores $v_j$; the 3rd field stores $\Delta v_j$; the 4th field stores the priority value of vertex $j$ for priority scheduling; the 5th field stores the input data associated with vertex $j$, such as the adjacency list. Users are responsible for initializing the $v$ fields and the $\Delta v$ fields through the provided API. The priority fields are automatically initialized based on the values of the $v$ fields and $\Delta v$ fields. Users read an input partition and fills entry $j$'s data field with vertex $j$'s input data.

\textbf{Receive Thread and Update Thread.} As described in Equation (\ref{eq:async}), DAIC is accomplished by two key operations, the receive operation and the update operation. In each worker, these two operations are implemented in two threads, the \emph{receive thread} and the \emph{update thread}. The receive thread performs the receive operation for all local vertices. Each worker receives the delta messages from other workers and updates the $\Delta v$ fields by accumulating the received delta messages. The update thread performs the update operation for all local vertices. When operating on a vertex, it updates the corresponding entry's $v$ field and $\Delta v$ field, and sends messages to other vertices.

\textbf{Scheduling within Update Thread.} The simplest scheduling policy is to schedule the local vertices for update operation in a round robin fashion. The update thread performs the update operation on the table entries in the order that they are listed in the local state table and round-by-round. The static scheduling is simple and can prevent starvation.

However, as discussed in Section \ref{sec:schedule}, it is beneficial to provide priority scheduling. In addition to the static round-robin scheduling, Maiter supports dynamic priority scheduling. A \emph{priority queue} in each worker contains a subset of local vids that have larger priority values. The update thread dequeues the vid from the priority queue, in terms of which it can position the entry in the local state table and performs an update operation on the entry. Once all the vertices in the priority queue have been processed, the update thread extracts a new subset of high-priority vids for next round update. The extraction of vids is based on the priority field. Each entry's priority field is initially calculated based on its initial $v$ value and $\Delta v$ value. During the iterative computation, the priority field is updated whenever the $\Delta v$ field is changed (\emph{i.e.}, whenever some delta messages are received).

The number of extracted vids in each round, \emph{i.e.}, the priority queue size, balances the tradeoff between the gain from accurate priority scheduling and the cost of frequent queue extractions. The priority queue size is set as a portion of the state table size. For example, if the queue size is set as 1\% of the state table size, we will extract the top 1\% high priority entries for processing. In addition, we also use the sampling technique proposed in \cite{priter} for efficient queue extraction, which only needs $O(N)$ time, where $N$ is the number of entries in local state table.

\textbf{Message Passing.} Maiter uses OpenMPI \cite{openmpi} to implement message passing between workers. A message contains a vid indicating the message's destination vertex and a value. Suppose that a message's destination vid is $k$. The message will be sent to worker $h(k)$, where $h()$ is the partition function for data partition, so the message will be received by the worker where the destination vertex resides.

A naive implementation of message passing is to send the output messages as soon as they are produced. This will reach the asynchronous iteration's full potential. However, initializing message passing leads to system overhead. To reduce this overhead, Maiter buffers the output messages and flushes them after a timeout. If a message's destination worker is the host worker, the output message is directly applied to the local state table. Otherwise, the output messages are buffered in multiple \emph{msg table}s, each of which corresponds to a remote destination worker. We can leverage early aggregation on the msg table to reduce network communications. Each msg table entry consists of a destination vid field and a value field. As mentioned in Section \ref{sec:accum}, the associative property of operator `$\oplus$', \emph{i.e.}, $(x\oplus y)\oplus z=x\oplus (y\oplus z)$, indicates that multiple messages with the same destination can be aggregated at the sender side or at the receiver side. Therefore, by using the msg table, Maiter worker combines the output messages with the same vid by `$\oplus$' operation before sending them.

\textbf{Iteration Termination.} To terminate iteration, Maiter exploits \emph{progress estimator} in each worker and a global \emph{terminator} in the master. The master periodically broadcasts a \emph{progress request signal} to all workers. Upon receipt of the termination check signal, the progress estimator in each worker measures the iteration progress locally and reports it to the master. The users are responsible for specifying the progress estimator to retrieve the iteration progress by parsing the local state table. After the master receives the local iteration progress reports from all workers, the terminator makes a global termination decision in respect of the global iteration progress, which is calculated based on the received local progress reports. If the terminator determines to terminate the iteration, the master broadcasts a \emph{terminate signal} to all workers. Upon receipt of the terminate signal, each worker stops updating the state table and dumps the local table entries to HDFS, which contain the converged results. Note that, even though we exploit a synchronous termination check periodically, it will not impact the asynchronous computation. The workers proceed after producing the local progress reports without waiting for the master's feedback.

\textbf{Fault Tolerance.} The fault tolerance support for synchronous computation models can be performed through checkpointing, where the state data is checkpointed on the reliable HDFS every several iterations. If some workers fail, the computation rolls back to the most recent iteration checkpoint and resumes from that iteration. Maiter exploits Chandy-Lamport \cite{Chandy:1985:DSD:214451.214456} algorithm to design asynchronous iteration's fault tolerance mechanism. The checkpointing in Maiter is performed at regular time intervals rather than at iteration intervals. The state table in each worker is dumped to HDFS every period of time. However, during the asynchronous computation, the information in the state table might not be intact, in respect that the messages may be on their way to act on the state table. To avoid missing messages, not only the state table is dumped to HDFS, but also the msg tables in each worker are saved. Upon detecting any worker failure (through probing by the master), the master restores computation from the last checkpoint, migrates the failed worker's state table and msg tables to an available worker, and notifies all the workers to load the data from the most recent checkpoint to recover from worker failure. For detecting master failure, Maiter can rely on a secondary master, which restores the recent checkpointed state to recover from master failure.

\subsection{Maiter API}

Users implement a Maiter program using the provided API, which is written in C++ style. A DAIC algorithm is specified by implementing three functionality components, \texttt{Partitioner}, \texttt{IterateKernel}, and \texttt{TermChecker} as shown in Fig. \ref{fig:api}.

\begin{figure}[!t]
  \centering
  \includegraphics[width=3.3in]{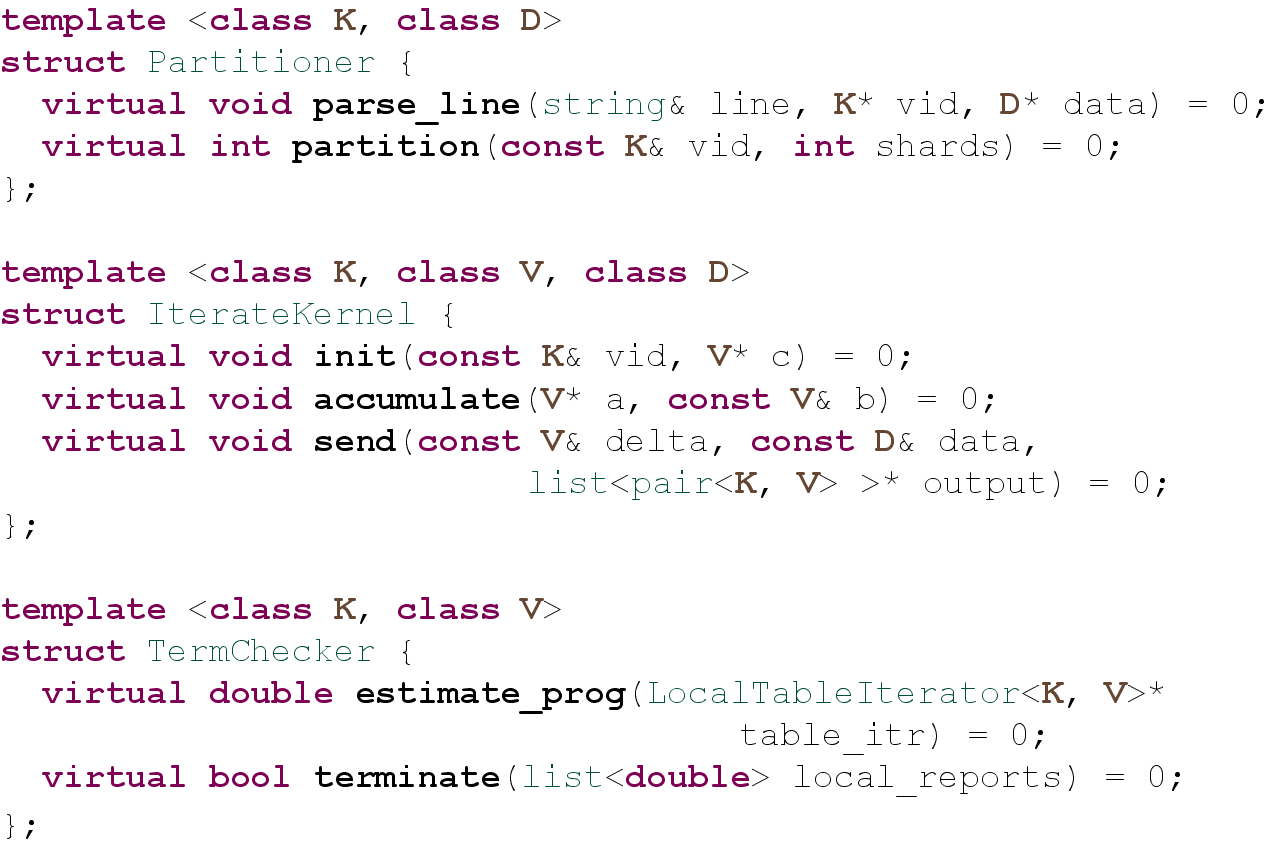}
  \caption{Maiter API summary.}
  \label{fig:api}
\end{figure}

\texttt{K}, \texttt{V}, and \texttt{D} are the template types of data element keys, data element values ($v$ and $\Delta v$), and data element-associate data respectively. Particularly, for each entry in the state table, \texttt{K} is the type of the key field, \texttt{V} is the type of the $v$ field/$\Delta v$ field/priority field, and \texttt{D} is the type of the data field. The \texttt{Partitioner} reads an input partition line by line. The \texttt{parse\_line} function extracts data element id and the associate data by parsing the given line string. Then the \texttt{partition} function applied on the key (e.g., a MOD operation on integer key) determines the host worker of the data element (considering the number of workers/shards). Based on this function, the framework will assign each data element to a host worker and determines a message's destination worker. In the \texttt{IterateKernel} component, users describe a DAIC algorithm by specifying a tuple ($g_{\{i,j\}}(x)$, $\oplus$, $v_j^0$, $\Delta v_j^1$). We initialize $v_j^0$ and $\Delta v_j^1$ by implementing the \texttt{init} interface; specify the `$\oplus$' operation by implementing the \texttt{accumulate} interface; and specify the function $g_{\{i,j\}}(x)$ by implementing the \texttt{send} interface with the given $\Delta v_i$ and data element $i$'s associate data, which generates the output pairs $\langle j, g_{\{i,j\}}(\Delta v_i)\rangle$ to data element $i$'s out-neighbors. To stop an iterative computation, users specify the \texttt{TermChecker} component. The local iteration progress is estimated by specifying the \texttt{estimate\_prog} interface given the local state table iterator. The global terminator collects these local progress reports. In terms of these local progress reports, users specify the \texttt{terminate} interface to decide whether to terminate.

For better understanding, we walk through how the PageRank algorithm is implemented in Maiter \footnote{More implementation example codes are provided at Maiter's Google Code website https://code.google.com/p/maiter/.}. Suppose the input graph file of PageRank is line by line. Each line includes a node id and its adjacency list. The input graph file is split into multiple slices. Each slice is assigned to a Maiter worker. In order to implement PageRank application in Maiter, users should implement three functionality components, \texttt{PRPartitioner}, \texttt{PRIterateKernel}, and \texttt{PRTermChecker}.

\begin{figure}[!t]
  \includegraphics[width=2.4in]{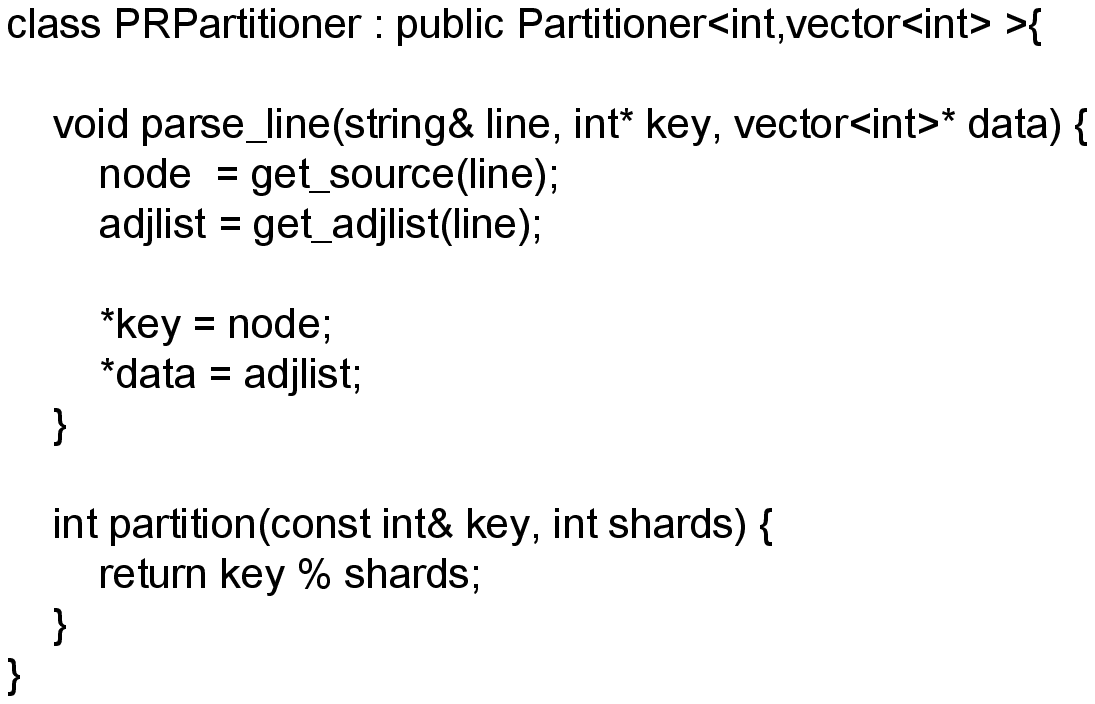}
  \caption{PageRankPartitioner implementation.}
  \label{fig:prpart}
\end{figure}

In \texttt{PRPartitioner}, users specify the \texttt{parse\_line} interface and the \texttt{partition} interface. The implementation code is shown in Fig. \ref{fig:prpart}. In \texttt{parse\_line}, users parse an input line to extract the node id as well as its adjacency list and use them to initialize the state table's key field (\texttt{key}) and data field (\texttt{data}). In \texttt{partition}, users specify the partition function by a simple \emph{mod} operation applied on the key field (\texttt{key}) and the total number of workers (\texttt{shards}).

\begin{figure}[!t]
  \includegraphics[width=3.2in]{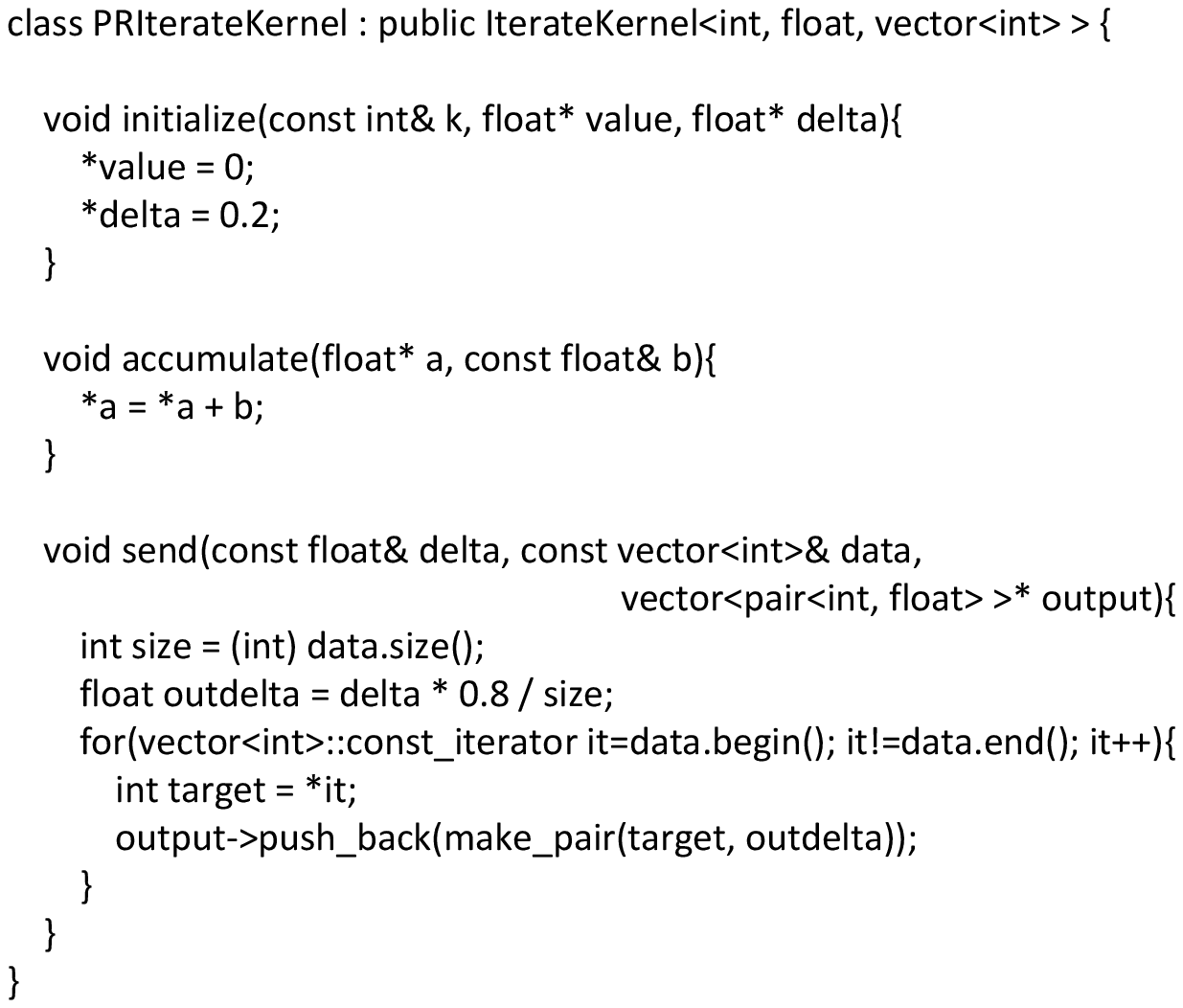}
  \caption{PRIterateKernel implementation.}
  \label{fig:priterate}
\end{figure}

In \texttt{PRIterateKernel}, users specify the asynchronous DAIC process by implementing the \texttt{init} interface, the \texttt{accumulate} interface, and the \texttt{send} interface. The implementation code is shown in Fig. \ref{fig:priterate}. In \texttt{init}, users initialize node $k$'s $v$ field (\texttt{value}) as 0 and $\Delta v$ field (\texttt{delta}) as 0.2. Users specify the \texttt{accumulate} interface by implementing the `$\oplus$' operator as `$+$' (\emph{i.e.}, $a=a+b$). The \texttt{send} operation is invoked after each update of a node. In \texttt{send}, users generate the output messages (contained in \texttt{output}) based on the node's $\Delta v$ value (\texttt{delta}) and data value (\texttt{data}).

\begin{figure}[!t]
  \includegraphics[width=3.0in]{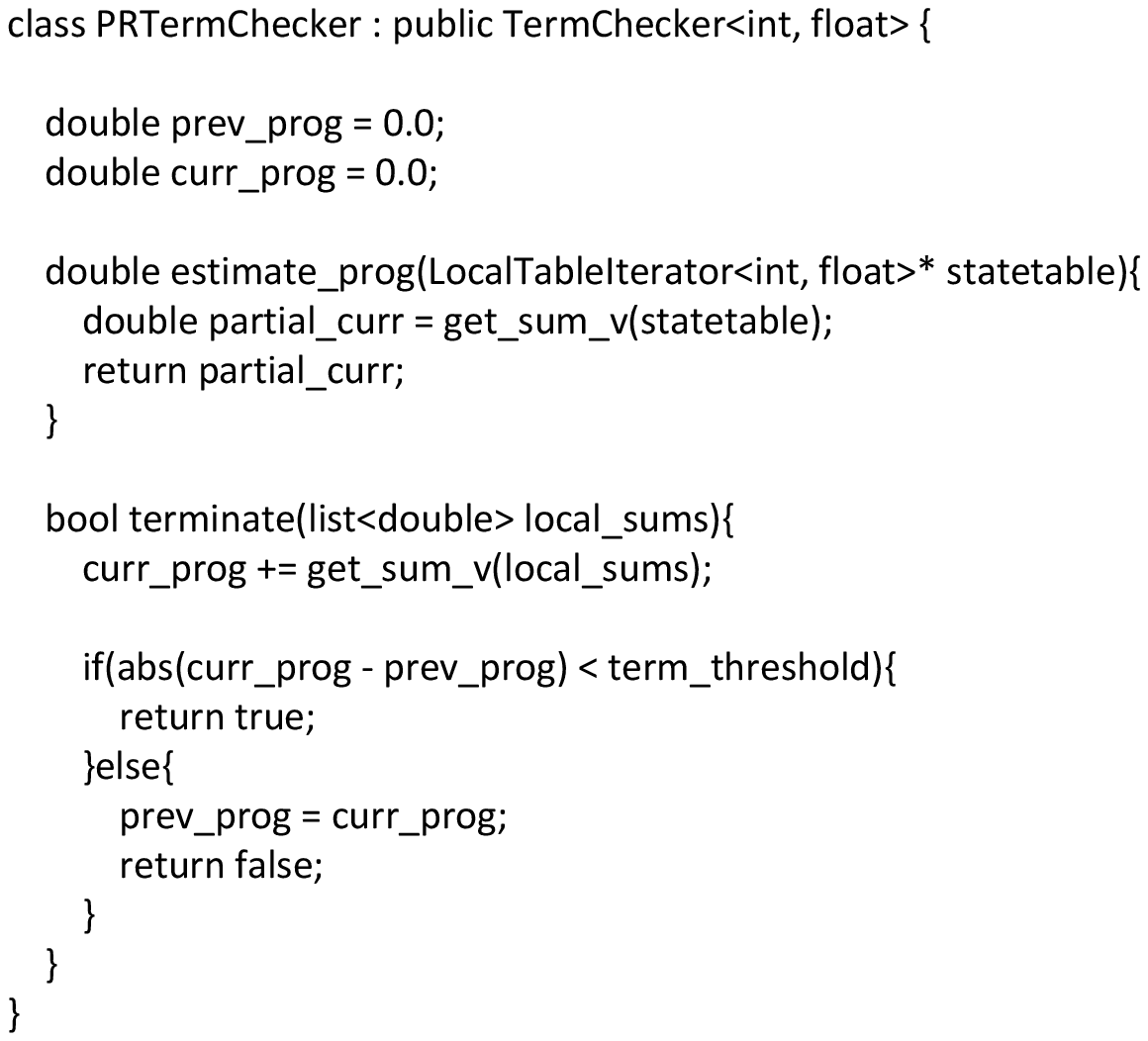}
  \caption{PRTermChecker implementation}
  \label{fig:prterm}
\end{figure}

In \texttt{PRTermChecker}, users specify the \emph{estimate\_prog} interface and the \texttt{terminate} interface. The implementation code is shown in Fig. \ref{fig:prterm}. In \texttt{estimate\_prog}, users compute the summation of $v$ value in local state table. The \emph{estimate\_prog} function is invoked after each period of time. The resulted local sums from various workers are sent to the global termination checker, and then the \texttt{terminate} operation in the global termination checker is invoked. In \texttt{terminate}, based on these received local sums, users compute a global sum, which is considered as the iteration progress. It is compared with the previous iteration's progress to calculate a progress difference. The asynchronous DAIC is terminated when the progress difference is smaller than a pre-defined threshold.

\section{Evaluation}
\label{sec:expr}

This section evaluates Maiter with a series of experiments.

\subsection{Preparation}

The experiments are performed on a cluster of local machines as well as on Amazon EC2 Cloud \cite{amazonec2}. The \emph{local cluster} consisting of 4 commodity machines is used to run small-scale experiments. Each machine has Intel E8200 dual-core 2.66GHz CPU, 3GB of RAM, and 160GB storage. The Amazon EC2 cluster involves 100 medium instances, each with 1.7GB memory and 5 EC2 compute units.

\subsubsection{Frameworks For Comparison}
\label{sec:expr:frameworks}

\textbf{\quad Hadoop} \cite{hadoop} is an open-source MapReduce implementation. It relies on HDFS for storage. Multiple map tasks process the distributed input files concurrently in the map phase, followed by that multiple reduce tasks process the map output in the reduce phase. Users are required to submit a series of jobs to process the data iteratively. The next job operates on the previous job's output. Therefore, two synchronization barriers exist in each iteration, between map phase and reduce phase and between Hadoop jobs. In our experiments, we use Hadoop 1.0.2.

\textbf{iMapReduce} \cite{imapreduce} is built on top of Hadoop and provides iterative processing support. In iMapReduce (\textbf{iMR}), reduce output is directly passed to map rather than dumped to HDFS. More importantly, the iteration variant state data are separated from the static data. Only the state data are processed iteratively, where the costly and unnecessary static data shuffling is eliminated. The original iMapReduce stores data relying on HDFS. iMapReduce can load all data into memory for efficient data access and can store the intermediate data in files for better scalability. We refer to the memory-based iMapReduce as \emph{iMR-mem} and the file-based iMapReduce as \emph{iMR-file}.

\textbf{Spark} \cite{Zaharia:2010:SCC:1863103.1863113} was developed to optimize large-scale iterative and interactive computation. It uses caching techniques and operates in-memory read-only objects to improve the performance for repeated operations. The main abstraction in Spark is resilient distributed dataset (RDD), which maintains several copies of data across memory of multiple machines to support iterative algorithm recovery from failures. The read and write of RDDs is coarse-grained (\emph{i.e.}, read or write a whole block of RDD), so the update of RDDs in iterative computation is coarse-grained. Besides, in Spark, the iteration variant state data can also be separated from the static data by specifying \texttt{partitionBy} and \texttt{join} interfaces. The applications in Spark can be written with Java or Scala. Spark is open-source and can be freely downloaded. In our experiments, we use Spark 0.6.2.

\textbf{PrIter} \cite{priter} enables prioritized iteration by modifying iMapReduce. It exploits the dominant property of some portion of the data and schedules them first for computation, rather than blindly performs computations on all data. The computation workload is dramatically reduced, and as a result the iteration converges faster. However, it performs the priority scheduling in each iteration in a synchronous manner. PrIter provides in-memory version (PrIter 0.1) as well as in-file version (PrIter 0.2). We refer to the memory-based PrIter as \emph{PrIter-mem} and the file-based PrIter as \emph{PrIter-file}.

\textbf{Piccolo} \cite{piccolo} is implemented with C++ and MPI, which allows to operate distributed tables. The iterative algorithm can be implemented by updating the distributed tables iteratively. The intermediate data are shuffled between workers continuously as long as some amount of the intermediate data are produced (fine-grained write), instead of waiting for the end of iteration and sending them together. The current iteration's data and the next iteration's data are stored in two global tables separately, so that the current iteration's data will not be overwritten. Piccolo can maintain the global table both in memory and in file. We only consider the in-memory version.

\textbf{GraphLab} ~\cite{DBLP:journals/corr/abs-1006-4990} supports both synchronous and asynchronous iterative computation with sparse computational dependencies while ensuring data consistency and achieving a high degree of parallel performance. It is also implemented with C++ and MPI. It first only supports the computation under multi-core environment exploiting shared memory (GraphLab 1.0). But later, GraphLab supports large-scale distributed computation under cloud environment (GraphLab 2.0). The static data and dynamic data in GraphLab can be decoupled and the update of vertex/edge state in GraphLab is fine-grained. Under asynchronous execution, several scheduling policies including FIFO scheduling and priority scheduling are supported in Graphlab. GraphLab performs a fine-grained termination check. It terminates a vertex's computation when the change of the vertex state is smaller than a pre-defined threshold parameter.

\begin{table}[!t]
    \caption{Comparison of Distributed Frameworks}
    \label{tab:frameworks}
    \centering
    \begin{tabular}{c|c|c|c|c|c}
    \hline
    \multirow{2}{*}{\bfseries name} & sep & in & fine-g & async & pri\\
                                    & data & mem & update & iter & sched\\
\hline\hline
 Hadoop & \texttimes & \texttimes & \texttimes & \texttimes & \texttimes \\
\hline
 iMR-file & $\checkmark$ & \texttimes & \texttimes & \texttimes & \texttimes \\
\hline
 iMR-mem & $\checkmark$ & $\checkmark$ & \texttimes & \texttimes & \texttimes \\
\hline
 Spark & $\checkmark$  & $\checkmark$ & \texttimes & \texttimes & \texttimes\\
\hline
 PrIter-file & $\checkmark$ & \texttimes & \texttimes & \texttimes & $\checkmark$ \\
\hline
 PrIter-mem & $\checkmark$ & $\checkmark$ & \texttimes & \texttimes & $\checkmark$ \\
\hline
 Piccolo & $\checkmark$  & $\checkmark$ & $\checkmark$ & \texttimes & \texttimes\\
\hline
 GraphLab-Sync & $\checkmark$  & $\checkmark$ & $\checkmark$ & \texttimes & \texttimes\\
\hline
 GraphLab-AS-fifo & $\checkmark$ & $\checkmark$ & $\checkmark$ & $\checkmark$ & \texttimes\\
\hline
 GraphLab-AS-pri & $\checkmark$ & $\checkmark$ & $\checkmark$ & $\checkmark$ & $\checkmark$\\
\hline
 Maiter-Sync & $\checkmark$ & $\checkmark$ & $\checkmark$ & \texttimes & \texttimes\\
\hline
 Maiter-RR & $\checkmark$ & $\checkmark$ & $\checkmark$ & $\checkmark$ & \texttimes\\
\hline
 Maiter-Pri & $\checkmark$ & $\checkmark$ & $\checkmark$ & $\checkmark$ & $\checkmark$\\
\hline
\end{tabular}
\end{table}

Table \ref{tab:frameworks} summarizes these frameworks. These frameworks are featured by various factors that help improve performance, including separating static data from state data (sep data), in-memory operation (in mem), fine-grained update (fine-g update), asynchronous iteration (async iter), and the priority scheduling mechanism under asynchronous iteration engine (pri sched).

\subsubsection{Applications and Data Sets}
Four applications, including PageRank, SSSP, Adsorption, and Katz metric, are implemented. We use Google Webgraph \cite{datasets} for PageRank computation. 

We generate synthetic massive data sets for these algorithms. The graphs used for SSSP and Adsorption are weighted, and the graphs for PageRank and Katz metric are unweighted. The node ids are continuous integers ranging from 1 to size of the graph. We decide the in-degree of each node following log-normal distribution, where the log-normal parameters are ($\mu=-0.5$, $\sigma=2.3$). Based on the in-degree of each node, we randomly pick a number of nodes to point to that node. For the weighted graph of SSSP computation, we use the log-normal parameters ($\mu=0$, $\sigma=1.0$) to generate the float weight of each edge following log-normal distribution. For the weighted graph of Adsorption computation, we use the log-normal parameters ($\mu=0.4$, $\sigma=0.8$) to generate the float weight of each edge following log-normal distribution. These log-normal parameters for these graphs are extracted from a few small real graphs downloaded from \cite{datasets}.

\subsubsection{Termination Condition of the Experiments}

To terminate iterative computation in PageRank experiment, we first run PageRank off-line to obtain a resulted rank vector, which is assumed to be the converged vector $R^*$. Then we run PageRank with different frameworks. We terminate the PageRank computation when the L1-Norm distance between the iterated vector $R$ and the converged vector $R^*$ is less than $0.001\cdot N$, where $N$ is the total number of nodes, \emph{i.e.}, $\sum_{j}(|R^*_j-R_j|)<0.001\cdot N$. For the synchronous frameworks (\emph{i.e.}, Hadoop, iMR-file, iMR-mem, Spark, PrIter-file, PrIter-mem, Piccolo, and Maiter-Sync), we check the convergence (termination condition) after every iteration. For the asynchronous frameworks (\emph{i.e.}, Maiter-RR, and Maiter-Pri), we check the convergence every termination check interval. For GraphLab variants, we set the parameter of convergence tolerance as 0.001 to terminate the computation. Note that, the time for termination check in Hadoop and Piccolo (computing the L1-Norm distance through another job) has been excluded from the total running time, while the other frameworks provide termination check functionality.

For SSSP, the computation is terminated when there is no update of any vertex. For Adsorption and Katz metric, we use the similar convergence check approach as PageRank.

\subsection{Running Time to Convergence}

\textbf{Local Cluster Results.} We compare different frameworks on running time in the context of PageRank computation. Due to the limited space, the termination approach of PageRank computation is presented in Section 5.3 of the supplementary file. Fig. \ref{fig:platform:local} shows the PageRank running time on Google Webgraph on our local cluster. Note that, the data loading time for the memory-based systems (other than Hadoop, iMR-file, iMR-mem) is included in the total running time.

\begin{figure}[!t]
  \centering
  \includegraphics[width=2.5in]{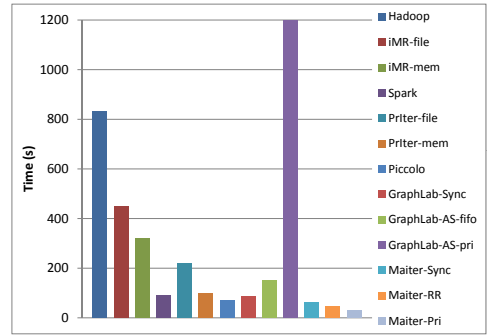}
  \caption{Running time of PageRank on Google Webgraph on local cluster.}
  \label{fig:platform:local}
  \vspace{-3mm}
\end{figure}

By using Hadoop, we need 27 iterations and more than 800 seconds to converge. By separating the iteration-variant state data from the static data, iMR-file reduces the running time of Hadoop by around 50\%. iMR-mem further reduces it by providing faster memory access. Spark, with efficient data partition and memory caching techniques, can reduce Hadoop time to less than 100 seconds. PrIter identifies the more important nodes to perform the update and ignores the useless updates, by which the running time is reduced. As expected, PrIter-mem converges faster than PrIter-file. Piccolo utilizes MPI for message passing to realize fine-grained updates, which improves the performance.

GraphLab variants show their differences on the performance. GraphLab-Sync uses a synchronous engine and completes the iterative computation within less than 100 seconds. GraphLab-AS-fifo uses an asynchronous engine and schedules the asynchronous updates in a FIFO queue, which consumes much more time. The reason is that the cost of managing the scheduler (through locks) tends to exceed the cost of the main PageRank computation itself. The cost of maintaining the priority queue under asynchronous engine seems even much larger, so that GraphLab-AS-pri converges with significant longer running time. More experimental results focusing on demonstrating the difference between GraphLab and Maiter can be found in Section 5.4 of the supplementary file.

The framework that supports synchronous DAIC, Maiter-Sync, filters the zero updates ($\Delta R=0$) and reduces the running time to about 60 seconds. Further, the asynchronous DAIC frameworks, Maiter-RR and Maiter-Pri, can even converge faster by avoiding the synchronous barriers. Note that, our priority scheduling mechanism does not result in high cost, since we do not need distributed lock for scheduling asynchronous DAIC. In addition, in priority scheduling, the approximate sampling technique \cite{priter} helps reduce the complexity, which avoids high scheduling cost.

\begin{figure}[!t]
  \centering
  \includegraphics[width=2.2in]{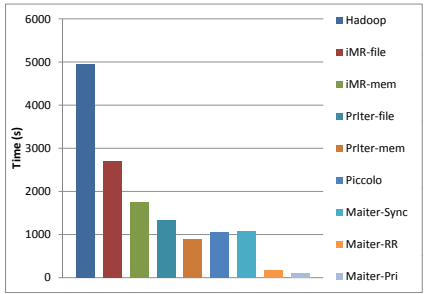}
  \caption{Running time of PageRank on 100-million-node synthetic graph on EC2 cluster.}
  \label{fig:platform:ec2}
  \vspace{-4mm}
\end{figure}

\textbf{EC2 Results.} To show the performance under large-scale distributed environment, we run PageRank on a 100-million-node synthetic graph on EC2 cluster. Fig. \ref{fig:platform:ec2} shows the running time with various frameworks. We can see the similar results. One thing that should be noticed is that Maiter-Sync has comparable performance with Piccolo and PrIter. Only DAIC is not enough to make a significant performance improvement. However, the asynchronous DAIC frameworks (Maiter-RR and Maiter-Pri) perform much better. The result is under expectation. As the cluster size increases and the heterogeneity in cloud environment becomes apparent, the problem of synchronous barriers is more serious. With the asynchronous execution engine, Maiter-RR and Maiter-Pri can bypass the high-cost synchronous barriers and perform more efficient computations. As a result, Maiter-RR and Maiter-Pri significantly reduce the running time. Moreover, Maiter-Pri exploits more efficient priority scheduling, which can achieve 60x speedup over Hadoop. This result demonstrates that only with asynchronous execution can DAIC reach its full potential.

\begin{figure}[!t]
  \centering
  \includegraphics[width=2.3in]{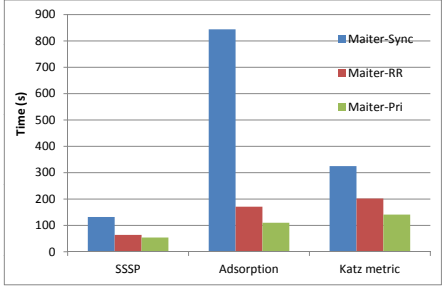}
  \caption{Running time of other applications (SSSP, Adsorption, and Katz metric) on EC2 cluster.}
  \label{fig:platform:otherapp}
  \vspace{-3mm}
\end{figure}

To show that Maiter can support more applications, we also run other applications on EC2 cluster. We perform SSSP, Adsorption, and Katz metric computations with Maiter-Sync, Maiter-RR, and Maiter-Pri. We generate weighted/unweighted 100-million-node synthetic graphs for these applications respectively. Fig. \ref{fig:platform:otherapp} shows the running time of these applications. For SSSP, the asynchronous DAIC SSSP (Maiter-RR and Maiter-Pri) reduces the running time of synchronous DAIC SSSP (Maiter-Sync) by half. For Adsorption, the asynchronous DAIC Adsorption is 5x faster than the synchronous DAIC Adsorption. Further, by priority scheduling, Maiter-Pri further reduces the running time of Maiter-RR by around 1/3. For Katz metric, we can see that Maiter-RR and Maiter-Pri also outperform Maiter-Sync.

\subsection{Efficiency of Asynchronous DAIC}
\label{sec:expr:updates}

As analyzed in Section \ref{sec:analysis}, with the same number of updates, asynchronous DAIC results in more progress than synchronous DAIC. In this experiment, we measure the number of updates that PageRank and SSSP need to converge under Maiter-Sync, Maiter-RR, and Maiter-Pri. In order to measure the iteration process, we define a \emph{progress metric}, which is $\sum_{j}R_j$ for PageRank and $\sum_{j}d_j$ for SSSP. Then, the \emph{efficiency} of the update operations can be seen as the ratio of the progress metric to the number of updates.

\begin{figure}[!htb]
\vspace{-4mm}
    \centerline{\subfloat[PageRank]{\includegraphics[width=1.25in, angle=-90]{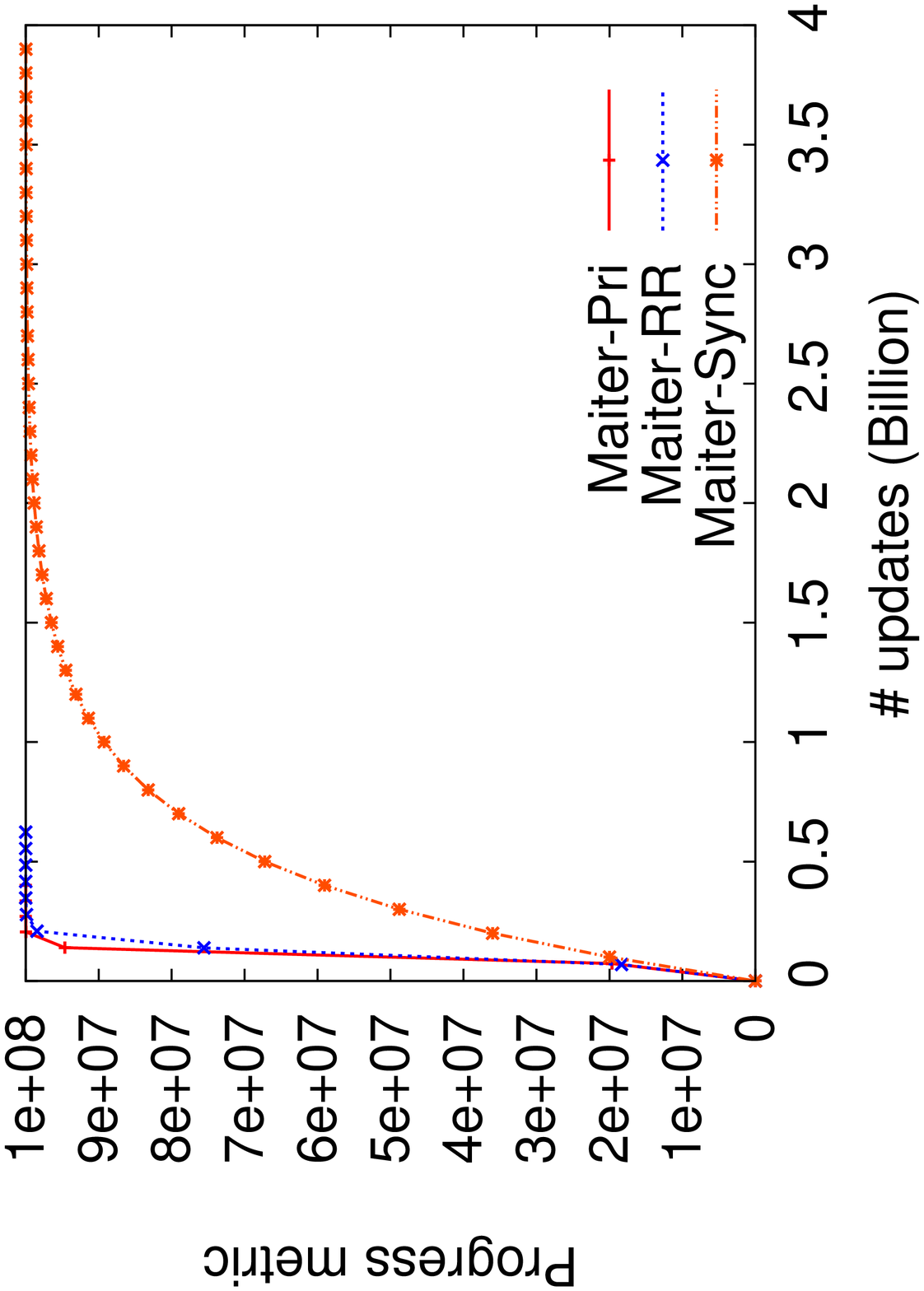}
    \label{fig:updates:pr}}
    \hfill
    \subfloat[SSSP]{\includegraphics[width=1.25in, angle=-90]{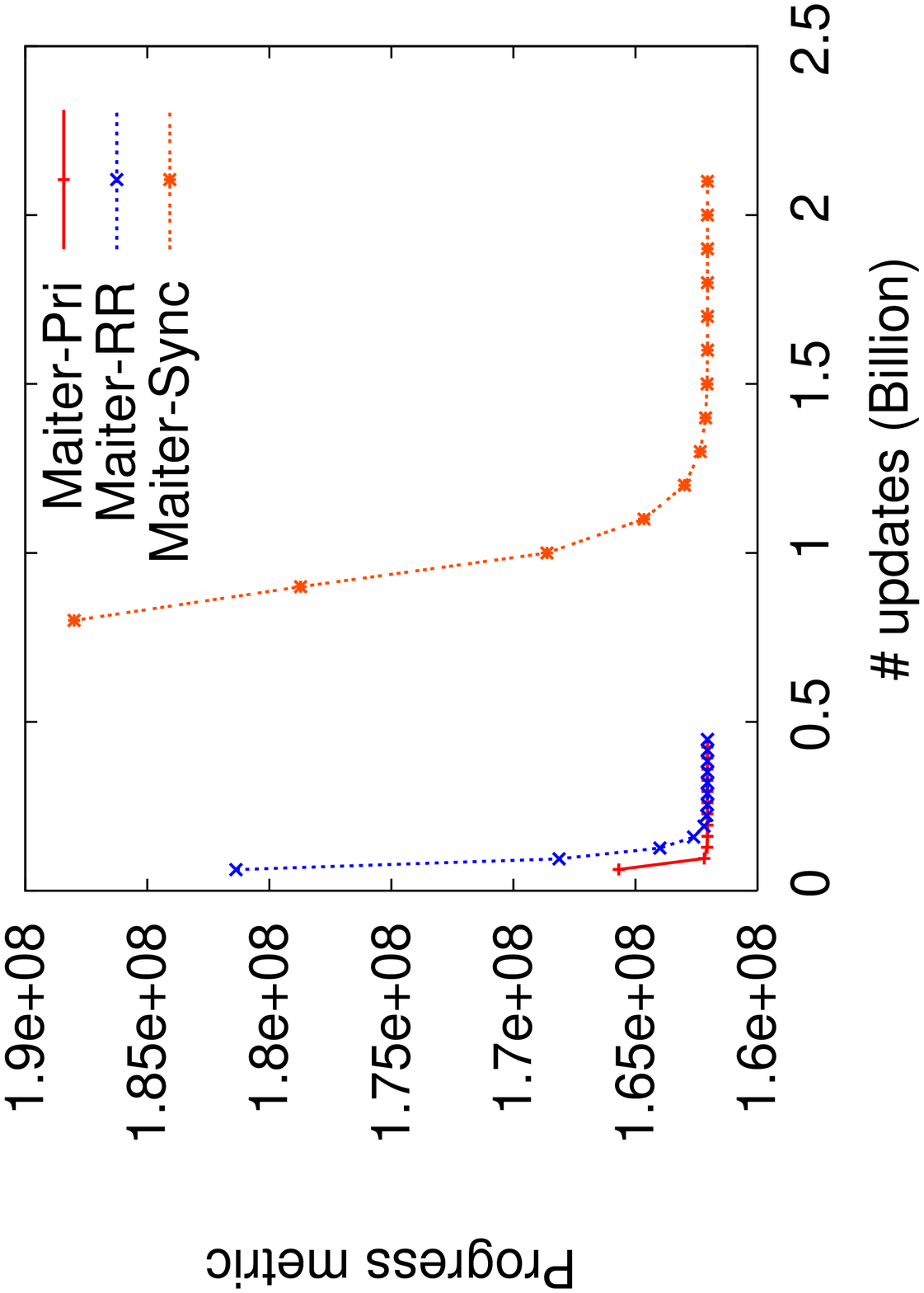}
    \label{fig:updates:sp}}
    }
    \caption{Number of updates vs. progress metric.}
    \label{fig:updates}
\end{figure}

On the EC2 cluster, we run PageRank on a 100-million-node synthetic graph and SSSP on a 500-million-node synthetic graph. Fig. \ref{fig:updates:pr} shows the progress metric against the number of updates for PageRank. In PageRank, the progress metric $\sum_{j}R_j$ should be increasing. Each $R_j^0$ is intialized to be 0 and each $\Delta R_j^1$ is initialized to be $1-d=0.2$ (the damping factor $d=0.8$). The progress metric $\sum_{j}R_j$ is increasing from $\sum_{j}R_j^1=\sum_{j}(R_j^0+\Delta R_j^1)=0.2\cdot N$ to $N$, where $N=10^{8}$ (number of nodes). Fig. \ref{fig:updates:sp} shows the progress metric against the number of updates for SSSP. In SSSP, the progress metric $\sum_{j}d_j$ should be decreasing. Since $d_j$ is initialized to be $\infty$ for any node $j\neq s$, which cannot be drawn in the figure, we start plotting when any $d_j<\infty$. From Fig. \ref{fig:updates:pr} and Fig. \ref{fig:updates:sp}, we can see that by asynchronous DAIC, Maiter-RR and Maiter-Pri require much less updates to converge than Maiter-Sync. That is, the update in asynchronous DAIC is more effective than that in synchronous DAIC. Further, Maiter-Pri selects more effective updates to perform, so the update in Maiter-Pri is even more effective.

\subsection{Scaling Performance}

Suppose that the running time on one worker is $T$. With optimal scaling performance, the running time on an $n$-worker cluster should be $\frac{T}{n}$. But in reality, distributed application usually cannot achieve the optimal scaling performance. In order to measure how asynchronous Maiter scales with increasing cluster size, we perform PageRank on a 100-million-node graph on EC2 as the number of workers increases from 20 to 100. We consider the running time on a 20-worker cluster as the baseline, based on which we determine the running time with optimal scaling performance on different size clusters. We consider Hadoop, Maiter-Sync, Maiter-RR, and Maiter-Pri for comparing their scaling performance.

\begin{figure}[!t]
  \centering
  \includegraphics[width=1.4in, angle=-90]{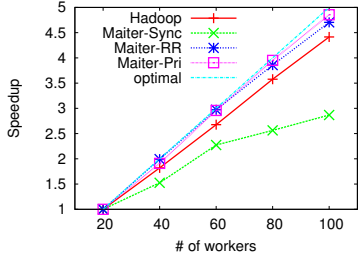}
  \caption{Scaling performance.}
  \label{fig:scale}
  \vspace{-2mm}
\end{figure}

Fig. \ref{fig:scale} shows the scaling performance of Hadoop, Maiter-Sync, Maiter-RR, and Maiter-Pri. We can see that the asynchronous DAIC frameworks, Maiter-RR and Maiter-Pri, provide near-optimal scaling performance as cluster size scales from 20 to 100. The performance of the synchronous DAIC framework Maiter-Sync is degraded a lot as the cluster size scales. Hadoop splits a job into many fine-grained tasks (task with 64MB block size), which alleviates the impact of synchronization and improves scaling performance.

\begin{figure}[!htb]
  \centering
  \includegraphics[width=1.5in, angle=-90]{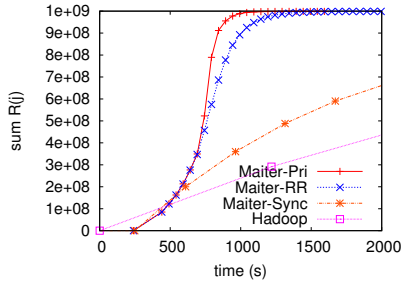}
  \caption{Running time vs. progress metric of PageRank on a 1-billion-node synthetic graph.}
  \label{fig:pr1G}
\end{figure}

In order to measure how Maiter scales with increasing input size, we perform PageRank for a 1-billion-node graph on the 100-node EC2 cluster. Maiter runs normally without any problem. Figure \ref{fig:pr1G} shows the progress metric against the running time of Hadoop, Maiter-Sync, Maiter-RR, and Maiter-Pri. Since it will take considerable long time for PageRank convergence in Hadoop and Maiter-Sync, we only plot the progress changes in the first 2000 seconds. Maiter-Sync, Maiter-RR, and Maiter-Pri spend around 240 seconds in loading data in memory before starting computation. The PageRank computations in the asynchronous frameworks (Maiter-RR and Maiter-Pri) converge much faster than that in the synchronous frameworks (Hadoop and Maiter-Sync). In addition, to evaluate how large graph Maiter can process at most in the 100-node EC2 cluster, we continue to increase the graph size to contain 2 billion nodes, and it works fine with memory usage up to 84.7\% on each EC2 instance.

\subsection{Comparison of Asynchronous Frameworks: Maiter vs. GraphLab}

In this experiment, we focus on comparing Maiter with another asynchronous framework GraphLab. Even though GraphLab support asynchronous computation, as shown in Fig. 2 of the TPDS manuscript, it shows poor performance under asynchronous execution engine. Especially for priority scheduling, it extremely extends the completion time.

GraphLab relies on chromatic engine (partially asynchronously) and distributed locking engine (fully asynchronous) for scheduling asynchronous computation. Distributed locking engine is costly, even though many optimization techniques are exploited in GraphLab. For generality, the scheduling of asynchronous computation should guarantee the dependencies between computations. Distributed locking engine is proposed for the generality, but it becomes the bottleneck of asynchronous computation. Especially for priority scheduling, the cost of managing the scheduler tends to exceed the cost of the PageRank computation itself, which leads to very slow asynchronous Pagerank computation in GraphLab. Actually, GraphLab's priority scheduling policy is designed for some high-workload applications, such as Loopy Belief Propagation \cite{Ihler:2005:LBP:1046920.1088703}, in which case the asynchronous computation advantage is much more substantial.

\begin{figure}[!htb]
  \centering
  \includegraphics[width=2.3in]{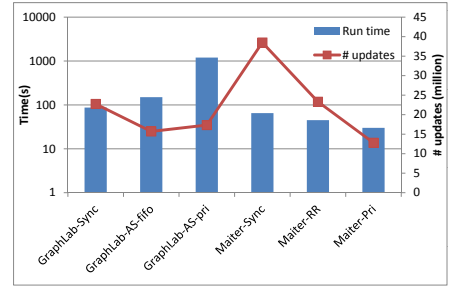}
  \caption{Running time and number of updates of PageRank computation on GraphLab and Maiter.}
  \label{fig:vsgl}
\end{figure}

To verify our analysis, we run PageRank on Maiter and GraphLab to compare the running time and the number of updates. The experiment is launched in the local cluster, and the graph dataset is the Google Webgraph dataset. Fig. \ref{fig:vsgl} shows the result. In GraphLab, the number of performed updates under asynchronous engine (both fifo scheduling and priority scheduling) is less than that under synchronous engine, but the running time is longer. Under asynchronous engine, the number of updates by priority scheduling is similar to that by fifo scheduling, but the running time is extremely longer. Even though the workload is reduced, the asynchronous scheduling becomes an extraordinarily costly job, which slows down the whole process.

On the contrary, asynchronous DAIC exploits the cumulative operator `$\oplus$', which has commutative property and associative property. This implicates that the delta values can be accumulated in any order and at any time. Therefore, Maiter does not need to guarantee the computation dependency while allows all vertices to update their state totally independently. Round-robin scheduling, which performs computation on the local vertices in a round-robin manner, is the easiest one to implement (\emph{i.e.}, with low overhead). Further, priority scheduling identifies the vertex importance and executes computation in their importance order, which can accelerate convergence. Both of them do not need to guarantee the global consistency and do not result in serious overhead. As shown in Fig. \ref{fig:vsgl}, round-robin scheduling and priority scheduling first reduce the workload (less number of updates), and as result shorten the convergence time.

\subsection{Communication Cost}
\label{sec:commovhd}

Distributed applications need high-volume communication between workers. The communication between workers becomes the performance bottleneck. Saving the communication cost correspondingly helps improve performance. By asynchronous DAIC, the iteration converges with much less number of updates, and as a result needs less communication.

\begin{figure}[!htb]
  \centering
  \includegraphics[width=2.6in]{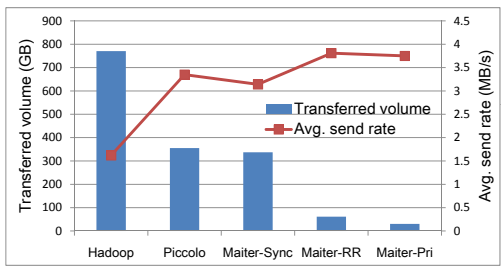}
  \caption{Communication cost.}
  \label{fig:commcost}
\end{figure}

To measure the communication cost, we run PageRank on a 100-million-node synthetic graph on the EC2 cluster. We record the amount of data sent by each worker and sum these amounts of all workers to obtain the total volume of data transferred. Figure \ref{fig:commcost} depicts the total volume of data transferred in Hadoop, Piccolo, Maiter-Sync, Maiter-RR, and Maiter-Pri. We choose Hadoop for comparison for its generality and popularity. Hadoop mixes the iteration-variant state data with the static data and shuffles them in each iteration, which results in high volume communication. Piccolo can separate the state data from the static data and only communicate the state data. Besides, unlike the file-based transfer in Hadoop, Piccolo communicates between workers through MPI. As shown in the figure, Piccolo results in less transferred volume than Hadoop. Maiter-Sync utilizes msg tables for early aggregation to reduce the total transferred volume in a certain degree. By asynchronous DAIC, we need less number of updates and as a result less amount of communication. Consequently, Maiter-RR and Maiter-Pri significantly reduce the transferred data volume. Further, Maiter-Pri transfers even less amount of data than Maiter-RR since Maiter-Pri converges with even less number of updates. Maiter-RR and Maiter-Pri run significantly faster, and at the same time the amount of shuffled data is much less.

In Figure \ref{fig:commcost}, we also show the average bandwidth that each worker has used for sending data. The worker in Maiter-RR and Maiter-Pri consumes about 2 times bandwidth than that in Hadoop and consumes only about 20\% more bandwidth than the synchronous frameworks, Piccolo and Maiter-Sync. The average consumed bandwidth in asynchronous DAIC frameworks is a little higher. This means that the bandwidth resource in a cluster is highly utilized.

\section{Related Work}
\label{sec:relatedwork}

The original idea of asynchronous iteration, chaotic iteration, was introduced by Chazan and Miranker in 1969 \cite{Chazan1969199}. Motivated by that, Baudet proposed an asynchronous iterative scheme for multicore systems \cite{Baudet:1978:AIM:322063.322067}, and Bertsekas presented a distributed asynchronous iteration model \cite{Bertsekas83distributedasynchronous}. These early stage studies laid the foundation of asynchronous iteration and have proved its effectiveness and convergence. Asynchronous methods are being increasingly used and studied since then, particularly so in connection with the use of heterogeneous workstation clusters. A broad class of applications with asynchronous iterations have been correspondingly raised \cite{Frommer:2000:AI:363882.363901,Miellou:1998:NCA:279724.279745}, such as PageRank \cite{McSherry:2005:UAA:1060745.1060829,conf/parco/KolliasGS05} and pairwise clustering \cite{yom-tov:parallel}. Our work differs from these previous works. We focus on a particular class of iterative algorithms and provide a new asynchronous iteration scheme, DAIC, which exploits the accumulative property.

On the other hand, to support iterative computation, a series of distributed frameworks have emerged. In addition to the frameworks we compared in Section~\ref{sec:expr}, many other synchronous frameworks are proposed recently. HaLoop~\cite{haloop}, a modified version of Hadoop, improves the efficiency of iterative computations by making the task scheduler loop-aware and employing caching mechanisms. CIEL~\cite{ciel} supports data-dependent iterative algorithms by building an abstract dynamic task graph. Pregel~\cite{1807184} aims at supporting graph-based iterative algorithms by proposing a graph-centric programming model. REX \cite{rex} optimizes DBMS recursive queries by using incremental updates. Twister~\cite{Ekanayake:2010:TRI:1851476.1851593} employs a lightweight iterative MapReduce runtime system by logically constructing a reduce-to-map loop. Naiad \cite{naiad} is recently proposed to support incremental iterative computations.

All of the above described works build on the basic assumption that the synchronization between iterations is essential. A few proposed frameworks also support asynchronous iteration. The partial asynchronous approach proposed in \cite{5600303} investigates the notion of partial synchronizations in iterative MapReduce applications to overcome global synchronization overheads.GraphLab~\cite{DBLP:journals/corr/abs-1006-4990} supports asynchronous iterative computation with sparse computational dependencies while ensuring data consistency and achieving a high degree of parallel performance. PowerGraph \cite{Gonzalez:2012:PDG:2387880.2387883} forms the foundation of GraphLab, which characterizes the challenges of computation on natural graphs. The authors propose a new approach to distributed graph placement and representation that exploits the structure of power-law graphs. GRACE \cite{grace} executes iterative computation with asynchronous engine while letting users implement their algorithms with the synchronous BSP programming model. To the best of our knowledge, our work is the first that proposes to perform DAIC for iterative algorithms. We also identify a broad class of iterative algorithms that can perform DAIC.

\section{Conclusions}
\label{sec:conclusion}

In this paper, we propose DAIC, delta-based accumulative iterative computation. The DAIC algorithms can be performed asynchronously and converge with much less workload. To support DAIC model, we design and implement Maiter, which is running on top of hundreds of commodity machines and relies on message passing to communicate between distributed machines. We deploy Maiter on local cluster as well as on Amazon EC2 cloud to evaluate its performance in the context of four iterative algorithms. The results show that by asynchronous DAIC the iterative computation performance is significantly improved.

\small
\bibliographystyle{abbrv}
\bibliography{./ref}




\end{document}